\theoremstyle{plain}
\newtheorem{theorem}{Theorem}[section]
\newtheorem{lemma}[theorem]{Lemma}
\newtheorem{claim}[theorem]{Claim}
\newtheorem*{claim*}{Claim}
\theoremstyle{definition}
\newtheorem{definition}[theorem]{Definition}
\theoremstyle{remark}
\newtheorem*{remark*}{Remark}
\newcommand{\ProblemName}[1]{\textsc{#1}}
\newcommand{\kCenter}{\ProblemName{$k$-Center}\xspace}
\newcommand{\E}{\mathbb{E}}
\newcommand{\RR}{\mathbb{R}}
\newcommand{\hash}{\varphi}
\def\eps{\varepsilon}
\newcommand{\uniform}{{\tt uniform}}
\newcommand{\lowdim}{{\tt low-dim}}
\newcommand{\benchmark}{{\tt benchmark}}
\DeclareMathOperator{\cost}{cost}
\DeclareMathOperator{\dist}{dist}
\DeclareMathOperator{\polylog}{polylog}
\DeclareMathOperator{\diam}{diam}
\DeclareMathOperator{\poly}{poly}
\DeclareMathOperator{\apx}{apx}
\DeclareMathOperator{\opt}{opt}
\DeclareMathOperator{\vol}{vol}
\DeclareMathOperator{\rep}{rep}
\DeclareMathOperator{\argmin}{argmin}
\DeclareMathOperator{\distEst}{\widehat{\dist}}
\newcommand{\alglinelabel}{\addtocounter{ALC@line}{-1}\refstepcounter{ALC@line}\label }
\title{Faster Approximation Algorithms for $k$-Center
via Data Reduction\thanks{This work is the outcome of an initial group work at the ``Massive Data Models and Computational Geometry'' workshop at the University of Bonn.}}
\author{
  Arnold Filtser%
  \thanks{
    Supported by the Israel Science Foundation (grant No. 1042/22).
    Email: \texttt{arnold.filtser@biu.ac.il}
  }\\
  Bar-Ilan university
  \and
  Shaofeng H.-C. Jiang%
  \thanks{
    Partially supported by a national key R\&D program of China No. 2021YFA1000900 and
    a startup fund from Peking University.
    Email: \texttt{shaofeng.jiang@pku.edu.cn}
  }\\
  Peking University
  \and 
  Yi Li%
  \thanks{
    Supported in part by Singapore Ministry of Education AcRF Tier 2 grant MOE-T2EP20122-0001.
    Email: \texttt{yili@ntu.edu.sg}
  }\\
  Nanyang Technological University
  \and
  Anurag Murty Naredla%
  \thanks{
    Email: \texttt{anuragmurty@uni-bonn.de}
  }\\
  University of Bonn
  \and
  Ioannis Psarros%
  \thanks{
    Partially supported by project MIS 5154714 of the National Recovery and Resilience Plan Greece 2.0 funded by the European Union under the NextGenerationEU Program.
    Email: \texttt{ipsarros@athenarc.gr}
  }\\
  Archimedes, Athena Research Center
  \and
  Qiaoyuan Yang%
  \thanks{
    Email: \texttt{2401111962@stu.pku.edu.cn}
  }\\
  Peking University
  \and 
  Qin Zhang%
  \thanks{
    Supported in part by NSF CCF-1844234.
    Email: \texttt{qzhangcs@iu.edu}
  }\\
  Indiana University Bloomington
}
\begin{document}

\maketitle

\begin{abstract}
    We study efficient algorithms for the Euclidean $k$-Center problem, focusing on the regime of large $k$.
    We take the approach of data reduction by considering $\alpha$-coreset,
    which is a small subset $S$ of the dataset $P$ such that any $\beta$-approximation on $S$ is an $(\alpha + \beta)$-approximation on $P$.
    We give efficient algorithms to construct coresets whose size is $k \cdot o(n)$,
    which immediately speeds up existing approximation algorithms.
    Notably, we obtain a near-linear time $O(1)$-approximation  when $k = n^c$ for any $0 < c < 1$.
We validate the performance of our coresets on real-world datasets with large $k$,
    and we observe that the coreset speeds up the well-known Gonzalez algorithm
    by up to $4$ times, while still achieving similar clustering cost.
    Technically, one of our coreset results is based on a new efficient construction of consistent hashing with competitive parameters. This general tool may be of independent interest for algorithm design in high dimensional Euclidean spaces.    
\end{abstract} \section{Introduction}
\label{sec:intro}

The \kCenter problem
is a fundamental clustering problem that has been extensively studied in various areas, including combinatorial optimization, data science, and machine learning.
In \kCenter, the input is a dataset $P \subseteq \RR^d$ of $n$ points and a parameter $k$. The goal is to find a set $C \subseteq \RR^d$ (called centers) with $\left| C \right| = k$ that minimizes the cost function
\begin{equation*}
    \cost(P, C) := \max_{p \in P} \dist(p, C),
\end{equation*}
where $\dist(p, q) := \|p - q\|_2$ and $\dist(p, C) := \min_{c \in C}\dist(p, c)$.
The \kCenter problem presents substantial computational challenges and remains  APX-hard even when $d = O(1)$~\cite{FederG88}.

The study of efficient approximation algorithms for \kCenter dates back to the 1980s, with several algorithms providing $2$-approximations ~\cite{Gonzalez85,HochbaumS85,HochbaumS86}. Among these, the classic algorithm of Gonzalez~\cite{Gonzalez85} runs in $O(nkd)$ time. While it achieves good performance for $k = O(1)$, its $O(nk)$ dependence renders it much less efficient for large $k$.
In fact, large values of $k$ in $k$-clustering are increasingly relevant in modern applications such as product quantization for nearest neighbor search~\cite{JegouDS11}  in vector databases.
This has motivated algorithmic studies for the large $k$ regime for $k$-clustering in various computational models~\cite{EneIM11,BateniEFM21,CoyCM23,CzumajGJK024,corr/abs-2407-11217}.
The record linkage problem, also known as entity resolution or reference reconciliation, has been a subject of study in databases for decades~\cite{KSS06,HSW07,DN09}. This problem can also be viewed as a $k$-center clustering problem, where $k$ represents the number of ground truth entities and is often very large.

For the large-$k$ regime,
it is possible to obtain a subquadratic $\tilde O(n^{2 - \sqrt \epsilon})$ time\footnote{Throughout the paper, the $\tilde{O}$-notation hides the dependence on $\poly(d \log n)$.
} $(2 + O(\epsilon))$-approximation algorithm\footnote{This can be obtained by combining an $O(n^{2 - \sqrt \epsilon})$-time $(1 + O(\epsilon))$-approximate $r$-net construction \cite{AvarikiotiEKP20} with a standard reduction of \kCenter to net constructions~\cite{HochbaumS86}.},
and a general ratio-time trade-off as an $O(c)$-approximation in $\tilde O(n^{1 + 1 / c^2})$ time~\cite{EppsteinHS20}.
Yet, it is unknown if these trade-offs can be improved.
The ideal goal is to design an $O(1)$-approximation in near-linear time $\tilde O(n)$,
which would directly improve the original Gonzalez's runtime by removing an $O(k)$ factor.
However, this seems to be challenging with current techniques. Indeed, even the apparently easier task of finding an $O(1)$-approximation to the cost of a \emph{given} center set $C$, without optimizing $C$, in near-linear $\tilde O(n)$ time remains unsolved.

In this paper, we present new trade-offs between approximation ratio and running time for \kCenter. 
Specifically, we focus on optimizing the trade-off in the case of $k = n^c\ (0 < c < 1)$.
Our results make a significant step forward towards the ultimate goal of achieving near-linear running time for $O(1)$-approximation in this regime.

 \begin{table*}[t]
     \caption{Summary of approximation algorithms for \kCenter in $\mathbb{R}^d$}
     \label{tab:results}
     \vspace*{-8pt}
     \centering
     \begin{tabular}{lll}
         \toprule
         Approx.\ ratio & Running time  &  Reference \\
         \midrule
         2           & $O(nkd)$      &  \cite{Gonzalez85} \\        
$O(\alpha)$ & $\tilde{O}(n^{1 + 1 / \alpha^2})$ &  \cite{EppsteinHS20} \\
         $O(\alpha)$ & $\tilde{O}(n + k^{1 + 1 / \alpha^2}n^{O(1 / \alpha^{2 / 3})})$ &  \Cref{thm:intro_hash} + \cite{EppsteinHS20} \\
         $O(\alpha)$ & $\tilde{O}(n k^{1 / \alpha^2})$ & \Cref{thm:intro_sample} + \cite{EppsteinHS20} \\
         \bottomrule
     \end{tabular}
 \end{table*}
\subsection{Our Results}
We take a \emph{data reduction} approach to systematically improve the running time of approximation algorithms for \kCenter.
Specifically, we use the notion of an $\alpha$-\emph{coreset} (for $\alpha \geq 1$), defined as a subset $S \subseteq P$ of the dataset $P$
such that any $\beta$-approximate solution to \kCenter on $S$ is an $(\alpha + \beta)$-approximation on the original dataset $P$.

Our main result consists of two coresets with slightly different parameter trade-offs, both of size $k \cdot o(n)$.
This essentially reduces the input size from $n$ to $k \cdot o(n)$, speeding up any (existing) approximation algorithm.
Notably,
we obtain an $O(1)$-approximation in near-linear time
for $k = n^c\ (0 < c < 1)$.
We summarize the new approximation algorithms in \Cref{tab:results}.

\noindent\textbf{Coresets.\ }
Our first result, \Cref{thm:intro_hash}, constructs an $O(\alpha)$-coreset of size $O(k n^{1 / \alpha^{2 / 3}})$ in a runtime that is near-linear in $n$ and independent of $\alpha$.
Running the existing $\tilde{O}(n^{1 + 1 / \alpha^2})$-time $O(\alpha)$-approximation algorithm~\cite{EppsteinHS20} on this coreset,
we obtain an $O(\alpha)$-approximation in time $\tilde{O}(n + k^{1 + 1 / \alpha^2} n^{O(1 / \alpha^{2 / 3})})$.
This immediately improves the algorithm in~\cite{EppsteinHS20}.
In particular, as long as $k = n^c$ (for any $0 < c < 1$) which can be arbitrarily close to linear $n$, this running time reduces to $\tilde{O}(n)$ by setting $\alpha = \poly((1-c)^{-1})$. 

\begin{restatable}{theorem}{thmintrohash}
\label{thm:intro_hash}
    For every $\alpha \geq  1$, there exists an $O(\alpha)$-coreset of size 
    $\tilde{O}(k n^{1 / \alpha^{2/3}})$ that can be computed in time $\tilde{O}(n)$ with probability at least $0.99$.
\end{restatable}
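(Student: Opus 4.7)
The plan is to build the coreset via a \emph{consistent hashing} scheme, which is a randomized partition $\hash_r : \RR^d \to \text{Buckets}$ parameterized by a scale $r > 0$ and satisfying: (i) every bucket has diameter at most $\alpha r$, so any single point in a bucket $\alpha r$-represents all others in it; and (ii) every Euclidean ball of radius $r$ is shattered into at most $\Gamma = \tilde{O}(n^{1/\alpha^{2/3}})$ buckets, with high probability. The theorem needs $\hash_r$ to be constructible and evaluable on the entire dataset in $\tilde{O}(n)$ total time; this is the main technical ingredient, and where most of the work lies (see below). Given such a $\hash_r$, the coreset at a fixed scale $r$ is obtained by evaluating $\hash_r$ on every point of $P$ and keeping one arbitrary representative from each non-empty bucket.

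For the analysis at a fixed scale, suppose $r \geq \opt(P)$, where $\opt(P)$ denotes the optimal $k$-center radius on $P$. Then the $k$ optimal balls (each of radius $\opt(P) \leq r$) cover $P$, and by (ii) each ball intersects at most $\Gamma$ buckets, so the representative set $S$ has $|S| \leq k\Gamma = \tilde{O}(k n^{1/\alpha^{2/3}})$. To verify the coreset property, let $C$ be any $\beta$-approximate solution on $S$, so $\cost(S,C) \leq \beta \cdot \opt(S) \leq \beta \cdot \opt(P)$ (using $S \subseteq P$). Each $p \in P$ shares a bucket with some $s(p) \in S$, hence $\dist(p, s(p)) \leq \alpha r$ by (i), and the triangle inequality gives
\begin{equation*}
\cost(P, C) \;=\; \max_{p \in P} \dist(p, C) \;\leq\; \max_{p \in P}\bigl(\dist(p, s(p)) + \dist(s(p), C)\bigr) \;\leq\; \alpha r + \beta \cdot \opt(P).
\end{equation*}
Thus any scale with $r = O(\opt(P))$ produces an $O(\alpha)$-coreset in the sense of the paper.

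To handle the unknown $\opt(P)$, I would iterate over $O(\log n)$ geometric scales $r \in \{2^i\}$ spanning the useful range, whose endpoints come from a cheap $O(nd)$-time constant-factor estimate of $\diam(P)$ (pick any anchor $p_0$ and use $\max_p \dist(p, p_0)$) and a straightforward lower bound on non-zero interpoint distances. For each scale we evaluate $\hash_r$ and count non-empty buckets, and output the representatives at the smallest scale $r^\star$ for which this count is at most $k\Gamma$. By (ii), $r^\star \leq 2\,\opt(P)$, so the coreset has size $\tilde{O}(k n^{1/\alpha^{2/3}})$ and additive error $2\alpha\cdot \opt(P)$, delivering the $(2\alpha + \beta)$-approximation required by the $O(\alpha)$-coreset definition. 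The total time is $\tilde{O}(n)$ per scale times $O(\log n)$ scales, i.e.\ $\tilde{O}(n)$, and the $0.99$ success probability follows by amplifying the consistent-hash guarantee per scale and union-bounding over the $O(\log n)$ scales (and, within each scale, over the $k$ hypothetical optimal balls).

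The main obstacle, as signaled by the abstract, is producing the consistent hashing scheme itself with the precise parameters $(\alpha, \Gamma = \tilde{O}(n^{1/\alpha^{2/3}}))$ and $\tilde{O}(n)$ evaluation time on high-dimensional Euclidean inputs: standard grid- or tree-based partitions and previously known Euclidean consistent hashings typically incur either a worse bucket-overlap bound (e.g.\ $n^{\Theta(1/\alpha)}$ rather than $n^{1/\alpha^{2/3}}$) or a $d$- or $\alpha$-dependent overhead in the runtime that overshoots $\tilde{O}(n)$. I would expect the construction to combine a Johnson--Lindenstrauss projection to logarithmic dimension with a randomly shifted grid (or locality-sensitive hashing at a carefully chosen parameter), and to balance the projection dimension against the cell granularity so as to land exactly on the $\alpha^{2/3}$ exponent. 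The rest of the argument above is essentially bookkeeping around this tool.
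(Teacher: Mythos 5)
Your approach is essentially the paper's: consistent hashing by a randomly shifted grid after a Johnson--Lindenstrauss projection to $O(\log n)$ dimensions, one representative per non-empty bucket, a covering-to-coreset reduction via the triangle inequality, and a geometric search over $O(\log n)$ scales. The main ideas, including the $n^{1/\alpha^{2/3}}$ exponent arising from the grid after JL, are all there.

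The one genuine gap is in how you bound the number of scales. You propose to span the range between an estimate of $\diam(P)$ and ``a straightforward lower bound on non-zero interpoint distances.'' But the aspect ratio of $P$ is unbounded, so $\diam(P)$ divided by the smallest pairwise distance can be super-polynomial in $n$; your search would then have $\omega(\log n)$ scales and the algorithm would not run in $\tilde{O}(n)$ time (quite apart from the cost of actually computing the smallest pairwise distance). The paper closes this via Lemma~\ref{lemma:lognapx}: project $P$ to one dimension with a random Gaussian vector (which preserves every pairwise distance to within a $\poly(n)$ factor with high probability) and solve 1D \kCenter exactly in $n\polylog n$ time, obtaining $\apx$ with $\opt \le \apx \le \poly(n)\cdot\opt$. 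The geometric search is then confined to $[\apx/\poly(n), \apx]$, which has $O(\log n)$ scales by construction. This coarse $\poly(n)$-factor estimate, rather than a raw interpoint-distance bound, is what keeps the total running time at $\tilde{O}(n)$. A second, smaller point: the paper's consistent-hashing guarantee (Definition~\ref{def:hashing}) is only in expectation, so it is applied via Markov together with a size-check stopping rule at each scale, rather than as a per-scale high-probability shattering bound; your amplify-and-union-bound fix would also work, but is not needed.
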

Our \Cref{thm:intro_hash} relies on a geometric hashing technique called \emph{consistent hashing}~\cite{CzumajJK0Y22} (see \Cref{def:hashing}).
Our main technical contribution is to devise a new consistent hashing that offers a competitive parameter trade-off,
while still running in $\poly(d)$ time,
exponentially improving the previous $\exp(d)$ time construction~\cite{CzumajJK0Y23} (albeit theirs achieves better parameter trade-offs).
See \Cref{sec:hash} for a more detailed discussion.
This new hashing result may be useful for algorithm design in high-dimensional Euclidean spaces in general.

Our second coreset (\Cref{thm:intro_sample}) has size $\tilde{O}(k)$, which is independent of $\alpha$,
but has a larger $\tilde O(nk^{1 / \alpha^2})$ construction time.
Running the algorithm in \cite{EppsteinHS20} on this coreset,
we obtain an alternative $O(\alpha)$-approximation for \kCenter in $\tilde{O}(n k^{1 / \alpha^2})$ time.

\begin{restatable}{theorem}{thmintrosample}
    \label{thm:intro_sample}
    For every $ \alpha \geq 1$, there exists an $O(\alpha)$-coreset of size $k\cdot \polylog(n)$ that can be computed in time $\tilde{O}(nk^{1/ \alpha^2})$ with  probability at least $0.99$.
\end{restatable}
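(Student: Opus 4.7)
The plan is to construct the coreset $S \subseteq P$ as an approximate net of $P$ at scale $\Theta(r^*)$, where $r^* := \opt(P)$. I want $S$ to satisfy two properties simultaneously: it should be an $O(\alpha) r^*$-\emph{cover} of $P$ (so that the triangle inequality promotes any $\beta$-approximation on $S$ into an $(O(\alpha)+\beta)$-approximation on $P$, using $\opt(S) \le \opt(P)$ since $S \subseteq P$), and it should be an $\Omega(r^*)$-\emph{packing} (so that $|S| \le O(k)$, because each of the $k$ optimal clusters has diameter at most $2r^*$ and contains at most one packing point).

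The core subroutine, for a fixed scale $r$, builds such a net greedily and is accelerated by a dynamic approximate near-neighbor (ANN) data structure $D$ maintained over $S$ (e.g., the Andoni--Indyk LSH, supporting $(r, \alpha r)$-ANN queries and insertions in time $\tilde{O}(|S|^{1/\alpha^2})$). Initialize $S = \emptyset$; scan $P$ in an arbitrary order; for each $p \in P$, query $D$ and add $p$ to both $S$ and $D$ iff the query reports no neighbor within radius $r$. Abort if $|S|$ exceeds $ck$ for a large enough constant $c$. By the ANN guarantee, upon non-abortion $S$ is a true $r$-packing (each added point was $>r$-far from all prior points of $S$) and an $\alpha r$-cover of $P$ (each skipped point had an ANN-returned neighbor in $S$ at distance $\le \alpha r$). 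The running time is dominated by $n$ queries on a structure over $O(k)$ points, totaling $\tilde{O}(n \cdot k^{1/\alpha^2})$.

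Since $r^*$ is unknown, I would invoke the subroutine for a geometric sequence of scales $r$. A standard preprocessing---for instance, an initial crude $O(n)$-approximation from a cheap subroutine, or minimum-spanning-tree-based aspect-ratio reduction---shrinks the relevant range of $r$ so that only $\polylog(n)$ scales need be tried. The smallest $r$ for which the subroutine does not abort satisfies $r = \Theta(r^*)$, and the resulting $S$ is an $O(\alpha) r^*$-cover of $P$ of size $k \cdot \polylog(n)$. The total time remains $\tilde{O}(n \cdot k^{1/\alpha^2})$.

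The main obstacle will be calibrating the ANN so that its randomized approximation gap is compatible with the packing/covering argument, and boosting the per-query success probability to $1 - 1/\poly(n)$ so that a union bound over the $n$ queries survives. The repetition needed for this boosting, combined with the $\polylog(n)$ candidate scales of $r$, is what contributes the $\polylog(n)$ factor to the coreset size stated in the theorem.
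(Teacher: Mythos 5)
Your proof takes a genuinely different route. The paper constructs the covering by repeated uniform sampling: in each of $O(\log n)$ rounds it draws $O(k\log n)$ random points from the set $Q$ of still-uncovered points, builds a \emph{fresh static} ANN structure over this sample, and removes the points whose approximate nearest sample is within $2\beta\tau$. An averaging argument (at least half of $Q$ lies in clusters containing a $\ge\frac{1}{2k}$ fraction of $Q$, and such clusters are hit w.h.p.) shows $|Q|$ halves each round. You instead build a greedy $r$-net via a \emph{dynamic} ANN maintained incrementally over the net itself, aborting once the net exceeds $ck$ points; the packing argument bounds the net size directly. Both are valid, and the trade-offs are informative: your construction yields a cleaner $O(k)$-size coreset (the paper's is $k\cdot\polylog n$ because it unions $O(\log n)$ samples of size $O(k\log n)$), while the paper's avoids any reliance on dynamic ANN. That reliance is the main thing I would ask you to shore up: the paper invokes \cite{AndoniI06} as a black box built once per round over a fixed sample, whereas your version needs insertions interleaved with queries and LSH parameters tuned to the abort threshold $ck$; this is standard folklore for LSH but is an extra ingredient that deserves a sentence, especially since the paper explicitly contrasts with the prior $\tilde O(n^{1+1/\alpha^2})$-time greedy-net approach and chose sampling precisely to sidestep it. Two small inaccuracies in your write-up: the smallest non-aborting scale $r$ need only satisfy $r = O(r^*)$, not $\Theta(r^*)$ (a lucky ordering can make a finer net fit within the budget, which is harmless since it only tightens the cover radius); and the $\polylog(n)$ factor in the theorem's coreset size is not explained by your boosting repetitions or the $\polylog(n)$ candidate scales---neither of those grows the single returned set $S$, so your construction actually produces $O(k)$ points, which is a strictly stronger bound than the stated $k\cdot\polylog(n)$.
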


Previously, it was observed that the point sequence discovered by an (approximate) furthest-neighbor traversal as in Gonzalez's algorithm~\cite{Gonzalez85} is an $O(1)$-coreset~\cite{BravermanJKW21},
    and one could use an algorithm in~\cite{EppsteinHS20} to find such a sequence,
    which yields an $O(\alpha)$-coreset of size $O(k)$ in time $\tilde O(n^{1 + 1 / \alpha^2})$.
While this coreset size is competitive, the running time remains super-linear in $k$ for $k = n^c\ (0 < c < 1)$, which is too large for our purpose of near-linear algorithms.

\noindent\textbf{Experiments.\ }
Our experiments validate the performance of our coresets,
with a focus on \Cref{thm:intro_hash},
since \Cref{thm:intro_hash} leads to near-linear running time for \kCenter when $k = n^c\ (0 < c < 1)$, which is likely to be practical.
Our experiments are conducted on four real-world datasets of various sizes and dimensions,
and we evaluate the speedup of the well-known Gonzalez's algorithm~\cite{Gonzalez85} on our coreset.
The experiments show that
our coreset provides a consistently better tradeoff between the coreset size and clustering cost than all baselines we compare,
and runs $2$ to $4$ times faster than directly running Gonzalez algorithm on the dataset,
while still achieving comparable cost values.

 \subsection{Related Work}
\label{sec:related}

Our notion of coreset is related to the widely considered \emph{strong} coreset~\cite{AgarwalHV04,Har-PeledM04},
which is a subset $S \subseteq P$ satisfying that $\cost(S, C) \in (1 \pm \epsilon) \cost(P, C)$ for \emph{all} center sets $C \subseteq \RR^d$.
The key difference is that ours may not preserve the cost value on $S$ for all $C$,
but it does preserve the approximation ratio.
Moreover, this stronger notion inherently leads to a prohibitively large coreset size of $\exp(\Omega(d))$, even for $k = 1$.\footnote{This lower bound is folklore, but can be easily proved using an $\epsilon$-net on the unit sphere.
}
Our notion is sometimes referred to as \emph{weak} coresets in the literature, and similar notions were also considered in~\cite{FeldmanMS07,MunteanuS18,HuangJL23,CGJK25}. \section{Preliminaries}
\label{sec:prelim}
\paragraph{Notations.}
For $m \in \mathbb{N}_{\geq 1}$, denote $[m] := \{ 1, \ldots, m \}$.
For a point set $C\subseteq\mathbb{R}^d$, let $\diam(C)$ denote the diameter of $C$.
For $x \in \RR^d$ and $r \geq 0$, the ball of radius $r$ centered at $x$ is denoted by $B(x, r) = \{y \in \RR^d : \dist (x, y) \leq r\}$, and we write $B_S(x, r) := B(x, r) \cap S$ for $S \subseteq \mathbb{R}^d$.
For two sets $A, B \subseteq \RR^d$,
their Minkowski sum is $A \oplus B := \{ x + y : x \in A, y \in B \}$.
For a function $f:\RR^d\to\RR^d$ and a set $S\subseteq \RR^d$, we define $f(S) := \{f(x): x\in S\}$.

\begin{definition}[Covering]
    \label{def:covering}
    Given a set $P \subseteq \RR^d$ and $\rho\geq 0$,  a subset $S \subseteq P$  is called a  $\rho$-covering for $P$ 
	if for every $p \in P$, there exists a $q\in S$ such that $\dist(p,q)\leq \rho$.
\end{definition}

The following lemma will be useful in both of our coreset constructions. Its proof is deferred to \Cref{sec:proof_lognapx}.
\begin{restatable}[Coarse approximation]{lemma}{lemmalognapx}
    \label{lemma:lognapx}
    There is an algorithm that, given as input a dataset $P \subset \RR^d$ with $\left| P \right| = n$ and an integer $k \geq 1$, computes a $\poly(n)$-approximation to the \kCenter cost value with probability at least $1-1 / \poly(n)$, running in time $O(nd + n\polylog(n))$.
\end{restatable}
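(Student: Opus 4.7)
The plan is a two-stage estimator combining an upper bound derived from a simple order statistic and a lower bound derived from a pigeonhole argument, with additional iterations to handle degenerate configurations.

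For the upper bound, I would pick an arbitrary $p_0 \in P$ and compute, in one $O(nd)$ pass, the distances $d_i := \dist(p_0, q_i)$ for every $q_i \in P \setminus \{p_0\}$, then sort them in $O(n \log n)$ time and let $r := d_{(n-k)}$ be the $(n-k)$-th smallest. The center set $\{p_0\} \cup \{q_{(n-k+1)}, \dots, q_{(n-1)}\}$, consisting of $p_0$ together with the $k-1$ farthest points from $p_0$, is a $k$-center set whose cost is at most $r$, so $\opt \leq r$.

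For the matching lower bound, I would apply pigeonhole to the $k+1$ points $A := \{p_0\} \cup \{q_{(n-k)}, \dots, q_{(n-1)}\}$: two of them must share an optimal cluster and hence lie within distance $2\opt$. If one of the colliding pair is $p_0$, then $\opt \geq r/2$ and we immediately obtain a $2$-approximation. Otherwise, the colliding pair lies entirely in the ``far tail'' $T := \{q_{(n-k)}, \dots, q_{(n-1)}\}$ of size $k$. In that degenerate case I would recurse on $T$ by rerunning the same procedure with a fresh random anchor drawn from $T$, iterating $O(\log n)$ rounds and retaining the minimum upper bound found. Each round operates only on $T$ and reuses quantities from the initial distance scan, so the total extra cost fits within $O(n\polylog n)$.

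The main obstacle is controlling the degenerate case: in pathological inputs with widely-separated, uneven clusters, $r$ may greatly exceed $\opt$ and the failure mode (the colliding pair lying entirely in $T$) could persist across many anchors. I expect the analysis to show that at each recursive round either the favorable pigeonhole case is triggered, or the effective distance scale in $T$ shrinks by a constant factor, so that after $O(\log n)$ rounds the cumulative approximation factor remains $\poly(n)$ with probability $1 - 1/\poly(n)$ over the random anchor choices. Combining the best bounds found across all rounds yields the desired $\poly(n)$-approximation within the required time budget.
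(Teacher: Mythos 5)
Your approach is entirely different from the paper's, and it contains genuine gaps that prevent it from going through. The paper projects $P$ to one dimension via a random Gaussian vector $v$, argues that $|\langle v, x-y\rangle| \in [\dist(x,y)/\poly(n),\ \dist(x,y)\cdot\poly(n)]$ simultaneously for all pairs with probability $1-1/\poly(n)$ (an easy Gaussian density/Markov calculation), and then runs a known exact $O(n\polylog n)$-time algorithm for one-dimensional $k$-center. Yours is an anchor-and-pigeonhole scheme with a recursive fallback.

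The main gap is the degenerate branch. First, the recursion is not well-defined: $|T|=k$, so running ``the same procedure'' on $T$ with parameter $k$ is vacuous (the $k$-center cost of a $k$-point set is $0$, and the order statistic $d_{(|T|-k)}$ does not exist). Second, and more fundamentally, whatever value you extract from inside $T$ in later rounds is not a valid upper bound on $\opt(P)$: the quantity $r=d_{(n-k)}$ from the first round is certified by an explicit center set covering all of $P$, but a small intra-$T$ distance does not correspond to any center set covering the $n-k$ points outside $T$, so ``retaining the minimum upper bound found'' can return a value below $\opt$ unless you also prove a matching lower bound, which you do not. Third, the claimed exponential shrinkage of the ``effective distance scale'' is unjustified and in fact false in general. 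Consider $P \subset \RR$ with $n-k$ points at $0$ and $T=\{x_1,\dots,x_k\}$ with $x_i = D(2 - 2^{-(i-1)})$, so the gaps $x_{i+1}-x_i = D2^{-i}$ shrink geometrically and $\opt \approx D2^{-k}$. The first round returns $r\approx D$ and lands in the degenerate case. A random anchor $x_i\in T$ has nearest-neighbor distance $\approx D2^{-i}$, and with only $O(\log n)$ random anchors from $[k]$ the largest index you sample is, with high probability, $k(1-\Theta(1/\log n))$, giving a reported value around $D2^{-k}\cdot 2^{\Theta(k/\log n)}$, which is a $2^{\Theta(k/\log n)}$-factor away from $\opt$ and exceeds $\poly(n)$ already for $k=\Theta(\sqrt n)$. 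Finally, the running-time claim also does not hold as stated: distances from a fresh anchor in $T$ are not derivable from the initial scan from $p_0$, so each round needs $\Theta(|T|d)=\Theta(kd)$ fresh work, and $O(\log n)$ rounds give $O(kd\log n)$, which exceeds the $O(nd+n\polylog n)$ budget when $k=\Theta(n)$ and $d$ is large. To make your style of argument work you would need both a certified lower bound in the degenerate case and a provable geometric decrease of the scale; neither is present. The paper's 1D-projection route avoids all of this by reducing directly to a problem with a fast exact solver.
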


 \section{Efficient Consistent Hashing}
\label{sec:hash}

The notion of \emph{consistent hashing} was coined in \cite{CzumajJK0Y23}, which partitions $\RR^d$ into cells such that each small ball in $\RR^d$ intersects only a small number of cells. Partitions with similar properties have also been studied under the notion of \emph{sparse partitions} for general metric spaces (see, e.g.,~\cite{JiaLNRS05,Filtser24}). The main differences are that consistent hashing requires the partition to be defined using a (data-oblivious) hash function and emphasizes computational efficiency.

Below we present our formal definition of consistent hashing, which relaxes the definition of \cite{CzumajJK0Y23} by only requiring the number of intersecting cells to be bounded in expectation.
\begin{definition}
	\label{def:hashing}
	A $(\Gamma, \Lambda,\ell)$-\emph{consistent hashing} is a distribution over 
	functions $\varphi : \RR^d \to \RR^d$ such that for every $x \in \RR^d$,
	\begin{itemize}[noitemsep,topsep=-1em]
		\item (diameter) 
		$\diam(\varphi^{-1}( x )) \leq \ell$, and
		\item (consistency) $\E[|\varphi(B(x,\frac\ell\Gamma)|] \leq \Lambda$.
	\end{itemize}
\end{definition}

Since consistent hashings are scale invariant in $\RR^d$, 
we omit the parameter $\ell$ in our discussion below. Ours and previous results are summarized in \Cref{tab:hash}.

For every parameter $\beta\ge 1$, \cite{Filtser24} constructed a \emph{deterministic}  consistent hashing (namely, the consistency guarantee is worst-case and not in expectation) with parameters $\Gamma=\beta$ and $\Lambda=\tilde{O}(d)\cdot\exp(O(d/\beta))$. However, computing $\hash(x)$ for a given point $x$ requires both time and space that are exponential in $d$.
Nevertheless, Filtser showed that this trade-off between $\Gamma$ and $\Lambda$ is tight up to second order terms regardless of runtime, even when the consistency guarantee is relaxed to expectation only (implicitly).
\cite{CzumajJK0Y23} constructed a deterministic consistent hashing with the same parameters, requiring only $\tilde{O}(d^2)$ space, though the function evaluation still takes exponential time in $d$.
They also constructed a time- and space-efficient consistent hashing, which can be evaluated in $\poly(d)$ time but with sub-optimal parameters of $\Lambda$ and $\Gamma$.

\begin{table*}[t]
    \caption{Summary of results on consistent hashing in Euclidean $\mathbb{R}^d$. The third result is a lower bound.}
    \label{tab:hash}
    \centering
    \begin{tabular}{llllll}
    \toprule
$\Gamma$ & $\Lambda$ & Guarantee & Runtime & Space & Reference \\              
\midrule
$1$         &  $\exp(O(d))$        &  worst-case          &  $\exp(d)$       &  $\exp(d)$     & \cite{JiaLNRS05} \\                    $\beta$         & $\tilde{O}(d)\cdot\exp(O(d/\beta))$          &  worst-case          & $\exp(d)$        & $\exp(d)$      & {\cite{Filtser24}}                  \\ $\beta$       & $\Lambda>(1+\frac{1}{2\beta})^d$          &    expected (implicit)       &  N/A       & N/A      & \cite{Filtser24} \\ $\beta$         & $\tilde{O}(d)\cdot\exp(O(d/\beta))$          &  worst-case          & $\poly(d)$        & $\exp(d)$&   {\cite{CzumajJK0Y23}}                       \\ $O(d^{1.5})$         & $O(d)$          & worst-case          & $O(d)$        & $\poly(d)$      &  \cite{CzumajJK0Y23}                \\ $\beta$         &   $\poly(d)\cdot \exp(O(d / \beta^{\frac23}))$        &  expected         &   $O(d)$      &   $O(d)$    & 
    \Cref{lemma:hash}                   \\ \bottomrule
\end{tabular}
\end{table*}

Our hash function is the first to achieve the bound $\Lambda = \exp(O(d / \beta^c))$ (for some $0 < c \leq 1$) when $\Gamma = \beta$ for every $\beta \geq 1$, while still running in \emph{polynomial} time in $d$.
Technically, we construct the hash function using a surprisingly simple randomly-shifted grid, which is widely used in geometric algorithm design.

Previous works also studied laminar consistent hashing \cite{BDRRS12,BCFHHR23}, which is a sequence of hash functions at different scales, each refining the previous one.
We note also that \cite{CZ16} studied a related notion to consistent hashing, but their diameter guarantee was only probabilistic, so it is not directly comparable.

\begin{lemma}
    \label{lemma:hash}
    For every $\beta \geq \sqrt{2\pi}$ and $\ell > 0$, there exists a $(\beta, t_\beta ,\ell)$-consistent hashing $\varphi : \RR^d \to \RR^d$ with 
     $t_\beta := \poly(d)\cdot \exp(O(d / \beta^{\frac23}))$ which can be computed 
     in $O(d)$ time.
\end{lemma}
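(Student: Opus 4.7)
The plan is to instantiate $\varphi$ as a randomly shifted axis-aligned grid of side length $s := \ell/\sqrt{d}$: draw a shift $\sigma$ uniformly from $[0,s]^d$ and set $\varphi(x) := s\lfloor (x-\sigma)/s\rfloor + \sigma$, i.e.\ the corner of the unique grid cell containing $x$. Each preimage $\varphi^{-1}(y)$ is then a half-open cube of side $s$, so its diameter is exactly $s\sqrt{d}=\ell$, yielding the diameter guarantee. Evaluating $\varphi$ requires only a coordinate-wise shift, floor, and rescale, giving the claimed $O(d)$ runtime.

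For the consistency bound I would first establish the standard volume identity
\[
\E_\sigma\bigl[\#\{\text{cells intersecting } S\}\bigr] \;=\; \frac{\vol(S \oplus [0,s]^d)}{s^d},
\]
which follows because the cell $C_z := sz+\sigma+[0,s]^d$ intersects $S$ iff $\sigma$ lies in a specific translate of $S \oplus [-s,0]^d$, and the translates $\{sz+[0,s]^d\}_{z \in \mathbb{Z}^d}$ tile $\mathbb{R}^d$, so summing those probabilities collapses to $\vol(S\oplus[-s,0]^d)/s^d$. Applying this with $S = B(x,\ell/\beta)$ and the Steiner formula for the cube,
\[
\E\bigl[|\varphi(B(x,\ell/\beta))|\bigr] \;=\; \sum_{k=0}^{d} \binom{d}{k}\, v_k \left(\tfrac{\sqrt d}{\beta}\right)^{\!k},
\]
where $v_k=\pi^{k/2}/\Gamma(k/2+1)$ is the volume of the unit $k$-ball and $r/s = \sqrt d/\beta$.

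The heart of the proof is bounding this sum by $\poly(d)\cdot\exp(O(d/\beta^{2/3}))$. Using $\binom{d}{k}\leq (ed/k)^k$ and the Stirling estimate $v_k\lesssim (2\pi e/k)^{k/2}$, the $k$-th term is at most $\bigl(Cd^{3/2}/(k^{3/2}\beta)\bigr)^{k}$ with $C:=e^{3/2}\sqrt{2\pi}$. Taking logarithms and optimizing over $k$ yields the critical index $k^\star = (Cd^{3/2}/\beta)^{2/3}/e = \Theta(d/\beta^{2/3})$, at which the logarithm equals $\tfrac{3}{2}k^\star = O(d/\beta^{2/3})$; the hypothesis $\beta\geq\sqrt{2\pi}$ is precisely what guarantees $k^\star \leq d$ so the optimum is interior rather than at the endpoint $k=d$. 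Finally, bounding the whole sum by $(d+1)$ times this maximum term absorbs only a $\poly(d)$ factor and delivers the claimed $t_\beta$. The only mildly technical step is this last calculation — carefully tracking the Stirling estimates and verifying that the continuous optimizer falls inside $\{1,\dots,d\}$ under the stated threshold; the volume identity and Steiner formula are either textbook or immediate from the random-shift structure.
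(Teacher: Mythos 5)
Your construction (randomly shifted grid of side $\ell/\sqrt d$), the diameter check, and the $O(d)$ evaluation time are exactly what the paper does. The genuine divergence is in bounding the expected number of intersected cells. The paper reduces this to the volume of $[0,1]^d \oplus B(\vec 0, r)$ via an auxiliary enlarged shift box $[0,\Delta]^d$ and then imports the required volume bound wholesale from Lemma~3.1 of Aiger, Kaplan and Sharir (cited as \Cref{lemma:AKS14}). You instead (a) prove the shift-averaging volume identity directly through the cleaner tiling argument, and (b) re-derive the volume bound from first principles: expand $\vol(K\oplus B(\vec 0,r))$ by the Steiner formula for the cube, estimate each term via $\binom{d}{k}\le(ed/k)^k$ and Stirling for $v_k$, and optimize the index $k$. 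This is a genuinely different route, and it is correct: the critical index $k^\star = (2\pi)^{1/3} d^{2/3} r^{2/3}$ and the exponent $\tfrac32 k^\star$ you obtain match the constant $\tfrac32(2\pi)^{1/3}$ in the cited lemma exactly, and your observation that the hypothesis $\beta\ge\sqrt{2\pi}$ is precisely what keeps $k^\star\le d$ likewise recovers the AKS constraint $r\le\sqrt{d/2\pi}$. What you gain is a fully self-contained and somewhat more illuminating proof (it exposes why the exponent $2/3$ appears); what the paper gains by citing is brevity. One small care point in your write-up: the per-term bound and the location of the optimizer should be handled uniformly over $k\in\{0,\dots,d\}$, including the degenerate case $k^\star<1$ when $\beta$ is very large, where the sum is just $\poly(d)$ — but this is routine and does not affect the conclusion.
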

\begin{proof}
    Since it suffices to define the hash function for an (arbitrary) fixed $\ell$, in this proof we fix $\ell := \sqrt{d}$.

    \noindent\textbf{Construction.\ }
    The hash is defined by a randomly-shifted grid. Formally, we first choose a uniformly random vector $v\in [0,1]^d$ and, for each $x\in\RR^d$, define $\hash(x) = \lfloor x+v\rfloor$. Here,
     for a vector $z = (z_1,\dots,z_d)\in\RR^d$, we define $\lfloor z\rfloor = (\lfloor z_1\rfloor, \lfloor z_2\rfloor,\dots,\lfloor z_d\rfloor)\in\RR^d$, i.e, rounding down $z$ coordinatewise.

    \noindent\textbf{Analysis.\ }
To evaluate $\varphi(x)$, we simply round $x+v$ down coordinatewise to the nearest integer, which takes $O(d)$ time.
    The diameter property is also straightforward, since $\hash^{-1}(t) = \times_{i=1}^d [t_i-v_i,t_i-v_i+1)$ ($t\in\mathbb{Z}^d$), which is a half-open unit cube and has diameter $\sqrt{d}=\ell$.

It remains to verify that an arbitrary ball of radius $r = \sqrt{d}/\beta$ intersects only ${\rm poly}(d) \exp(O(d / \beta^{\frac{2}{3}}))$ grid cells in expectation.
    Let $x\in\mathbb{R}^d$ be arbitrary and consider the ball $B(x,r)$. Let $\tilde{r} = \lceil r\rceil$. By symmetry, we can assume w.l.o.g. that $x\in[\tilde{r},\tilde{r}+1)^d$.
    Further, for the sake of analysis only, we will slightly change the hash function.
    Let $\Delta\gg \tilde{r}$ be some fixed large integer to be determined later.
    Instead of sampling $v\in[0,1]^d$, we sample $v=(v_1,\dots,v_d)\in[0,\Delta]^d$ uniformly at random and map each point $y$ to $\lfloor y + v\rfloor$.
    Note that the number of intersecting grid cells by a ball centered at $(x+v)$ equals the number of intersecting cells by a ball centered at $x+(v_1\bmod 1,\dots,v_d\bmod 1)$.
    Thus, the two hash functions have exactly the same expected consistency.

    Since $x\in[\tilde{r},\tilde{r}+1)^d$ and $v\in[0,\Delta]^d$, the ball $B(x+v,r)$ can
    only intersect grid cells in the box ${\cal G}=\left[0,\Delta+3\tilde{r}\right]^{d}$. Fix some grid cell $K=\times_{i = 1}^d [t_i, t_i + 1)\subset{\cal G}$.
    Let $X_K$ be an indicator for the event that the ball $B(x+v,r)$ intersects $K$.
This happens if and only if the ball 
    $B(v,r)$ intersects the box $\times_{i=1}^{d}[t_{i}-x_{i},t_{i}-x_{i}+1)$, or, 
    $v$ is contained in the Minkowski sum of the box $\times_{i=1}^{d}[t_{i}-x_{i},t_{i}-x_{i}+1)$ and the ball $B(\vec{0},r)$.
The following lemma bounds the volume of this Minkowski sum.\begin{lemma}[{\cite{aiger2014reporting}}, Lemma 3.1]\label{lemma:AKS14}
        Let $C = [0,1]^{d}$ be the unit cube in $\RR^d$ and $0 < r \leq \sqrt{d/2\pi}$ be a parameter. Let $C_{r}=C\oplus B(\vec{0},r)$, then 
        $${\vol}\left(C_{r}\right)\le{\rm poly}(d)\cdot \exp\left(3/2 \cdot (2\pi)^{\frac{1}{3}}d^{\frac{2}{3}} r^{\frac{2}{3}}\right)~.$$
    \end{lemma}
    Applying \Cref{lemma:AKS14} with our $r = \sqrt{d} / \beta$, we have
    \begin{align*}
    {\vol}\left(C_{\sqrt{d}/\beta}\right) & \leq{\rm poly}(d)\cdot\exp\left(\frac{3}{2}(2\pi)^{\frac{1}{3}}d^{\frac{2}{3}}\cdot(\sqrt{d}/\beta)^{\frac{2}{3}}\right)\\
     & ={\rm poly}(d)\cdot\exp\left(\frac{3}{2}(2\pi)^{\frac{1}{3}}d/\beta^{\frac{2}{3}}\right)\\
     & ={\rm poly}(d)\cdot\exp\left(O(d/\beta^{\frac{2}{3}})\right)~.
    \end{align*}
    Therefore,    
\begin{align*}
\E\,X_{K} & \overset{(*)}{\le}\frac{{\vol}\left([t_{i}-x_{i},t_{i}-x_{i}+1)\oplus B(\vec{0},r)\right)}{{\vol}\left([0,\Delta]^{d}\right)}\\
 & =\frac{{\vol}\left(C_{r}\right)}{\Delta^{d}}=\frac{1}{\Delta^{d}}\cdot{\rm poly}(d)\cdot\exp(O(d/\beta^{\frac{2}{3}}))~.
\end{align*}
    Here $^{(*)}$ is an inequality, rather than equality, because the Minkowski sum $[t_{i}-x_{i},t_{i}-x_{i}+1)\oplus B(\vec{0},r)$ might not be fully contained in $[0,\Delta]^d$.
    Only grid cells from ${\cal G}=\left[0,\Delta+3\tilde{r}\right]^{d}$ have a non-zero probability of intersecting $B(x+v,r)$. Since there are only $(\Delta+3\tilde{r})^d$ such grid cells $K$,
    by linearity of expectation, the expected number of grid cells intersecting $B(x+v,r)$ is at most 
    \begin{align*}
        {(\Delta+3\tilde{r})^{d}}/{\Delta^{d}}&\cdot{\rm poly}(d)\cdot\exp(O({d}/{\beta^{\frac{2}{3}}}))  \\
        &={\rm poly}(d)\cdot\exp(O({d}/{\beta^{\frac{2}{3}}}))~,
    \end{align*}
    where the last equality holds for large enough $\Delta$. This verifies the consistency bound of the consistent hashing and completes the proof of \Cref{lemma:hash}.
\end{proof} \section{Proof of \Cref{thm:intro_hash}}
\label{sec:cover_hash}

We prove \Cref{thm:intro_hash} in this section (restated below).  
\thmintrohash*

We start by reducing the task of finding coresets to the construction of $\rho$-\emph{coverings} (see \Cref{def:covering})
via a standard fact that any $\alpha$-approximation on an $(\beta \opt)$-covering is a $(\alpha + \beta)$-approximation to \kCenter (see \Cref{lemma:cover_to_approx});
hence, it suffices to find a small $(\beta\opt)$-covering as a coreset.
Indeed, covering is a fundamental notion in geometric optimization. In the context of \kCenter,
it can be viewed as a bi-criteria approximation that uses slightly more than $k$ center points.

\begin{lemma}
\label{lemma:cover_to_approx}
    For a dataset $P \subseteq \RR^d$ and integer $k$,
    consider a $(\beta \opt)$-covering $S \subseteq P$ for some $\beta \geq 1$.
    Then any $\alpha$-approximation on $S$ is an $(\alpha + \beta)$-approximation on $P$ for \kCenter.
    In other words, $S$ is a $\beta$-coreset.
\end{lemma}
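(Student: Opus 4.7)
The plan is to prove the statement by a direct triangle inequality argument, using two simple facts: the covering property translates distances from $P$ to $S$ with an additive loss of at most $\beta\opt$, and since $S \subseteq P$, the optimal \kCenter cost on $S$ is no larger than the optimal on $P$.

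First, I would set up notation: write $\opt_P$ and $\opt_S$ for the optimal \kCenter costs on $P$ and $S$ respectively, and let $C \subseteq \RR^d$ be an arbitrary $\alpha$-approximate center set on $S$, so that $\cost(S, C) \leq \alpha \cdot \opt_S$. The goal is to bound $\cost(P, C) \leq (\alpha+\beta) \opt_P$.

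The key step is the following. Fix any $p \in P$. By the covering property of $S$, there exists some $q \in S$ with $\dist(p,q) \leq \beta \opt_P$. Then by the triangle inequality,
\begin{equation*}
\dist(p, C) \leq \dist(p, q) + \dist(q, C) \leq \beta \opt_P + \cost(S, C) \leq \beta \opt_P + \alpha \opt_S.
\end{equation*}
Since $S \subseteq P$, any optimal set of $k$ centers for $P$ is also a feasible solution for $S$, giving $\opt_S \leq \opt_P$. Combining,
\begin{equation*}
\dist(p, C) \leq (\alpha + \beta) \opt_P.
\end{equation*}
Taking the maximum over $p \in P$ yields $\cost(P, C) \leq (\alpha+\beta) \opt_P$, which is exactly the $(\alpha+\beta)$-approximation guarantee.

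There is no real obstacle here: the lemma is essentially a two-line triangle-inequality argument, and the only subtlety is to remember to invoke $\opt_S \leq \opt_P$ (which uses $S \subseteq P$, a condition built into the definition of coreset). I would make sure to note explicitly that this monotonicity of the optimum relies on $S$ being a subset of $P$ rather than an arbitrary point set, since otherwise the bound $\opt_S \leq \opt_P$ could fail.
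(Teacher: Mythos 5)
Your proof is correct and follows essentially the same route as the paper's: a single triangle inequality pushes the cost from $P$ to $S$ at the price of the covering radius $\beta\opt_P$, and then $\opt_S \leq \opt_P$ finishes the bound. One small point worth noting: you justify $\opt_S \leq \opt_P$ by saying an optimal center set for $P$ is feasible for $S$, attributing this to $S \subseteq P$. That is only half the story. The paper explicitly flags that this monotonicity relies on centers being allowed anywhere in $\RR^d$ (the continuous version of \kCenter); in the discrete variant where centers must be chosen from the input set, $\opt_S \leq \opt_P$ can fail even when $S \subseteq P$, because an optimal center set for $P$ need not lie inside $S$. Both the subset relation and the unrestricted center domain are needed, so when you make your remark explicit you should mention the latter as well.
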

\begin{proof}
For a generic point set $W \subseteq \mathbb{R}^d$ and a point $x \in \mathbb{R}^d$,
    we define the projection function $\pi_W(x) := \argmin_{y \in W} \dist(x, y)$, which maps $x$ to its nearest neighbor in $W$ (ties are broken arbitrarily).
    Since $S$ is a ($\beta\opt$)-covering,
    for every $p \in P$ we have $\dist(p, \pi_S(p)) \leq \beta \opt$.
    Let $\widehat{C}$ be an $\alpha$-approximation to \kCenter on $S$.
    Then, \begin{align*}
         \cost(P, \widehat{C}) &= \max_{p \in P} \dist(p, \pi_{\widehat{C}}(p)) \\
         &\leq \max_{p \in P} \dist(p, \pi_S(p)) + \dist(\pi_S(p), \pi_{\widehat{C}}(\pi_S(p))) \\
         &\leq \max_{p \in P} \dist(p, \pi_S(p)) + \max_{p\in S}\dist(p, \pi_{\widehat{C}}(p)) \\
         &\leq \beta \opt + \alpha \opt_S \\
         &\leq \beta \opt + \alpha \opt,
     \end{align*}
     where the last inequality follows from the fact that the optimal \kCenter cost on the subset $S$ cannot be larger than the optimal \kCenter cost on $P$, which is true since we consider the continuous version of the \kCenter problem where centers are chosen from the entire $\mathbb{R}^d$. 
\end{proof}

Thanks to \Cref{lemma:cover_to_approx},
it remains to find a small $(\beta \opt)$-covering.
We give the following construction of covering based on consistent hashing (\Cref{def:hashing}). This is the main technical lemma for \Cref{thm:intro_hash}. Its proof is postponed to \Cref{sec:proof_cover_hash}.

\begin{lemma}
    \label{lemma:cover_hash}
    There is an algorithm that takes as input a dataset $P \subseteq \RR^d$ with $\left| P \right| = n$, $\beta \geq 1$ and  integer $k \geq 1$,
    computes a set $S \subseteq P$ with $|S| \leq k\cdot\poly(d)\exp(d / \beta^{2/3})$ in time $O(n d \log n)$,
    such that $S$ is an $O(\beta \opt)$-covering of $P$ with probability at least $0.991$.
\end{lemma}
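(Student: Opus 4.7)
The plan is to sample a consistent hash at a scale $\ell$ close to $\beta\opt$: each cell then has diameter at most $\ell = O(\beta\opt)$, so taking one arbitrary representative per nonempty cell yields an $O(\beta\opt)$-covering $S$, while the expected number of nonempty cells is bounded by $k\Lambda$ via the consistency guarantee applied to the $k$ balls of radius $\opt$ in an optimal solution. Since $\opt$ is unknown, I would combine this with the coarse estimate from \Cref{lemma:lognapx} and a geometric search over $O(\log n)$ candidate scales.

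Concretely, first run \Cref{lemma:lognapx} to obtain, in time $\tilde O(n)$ and with probability $1-1/\poly(n)$, a value $R$ with $\opt \leq R \leq \poly(n)\cdot\opt$. Then for $i = 0, 1, \dots, T$ with $T = O(\log n)$ chosen so that $\ell_T = \beta R/2^T \leq \beta\opt$ (possible since $R/\opt \leq \poly(n)$), independently sample a $(\beta,\Lambda,\ell_i)$-consistent hash $\varphi_i$ from \Cref{lemma:hash} at scale $\ell_i = \beta R/2^i$, with $\Lambda = \poly(d)\exp(O(d/\beta^{2/3}))$. By \Cref{lemma:hash}, each $\varphi_i$ can be evaluated in $O(d)$ time per point, so hashing all of $P$ costs $O(nd)$ per scale; the number $N_i$ of distinct hash codes can then be computed in $O(nd)$ per scale by standard hashing, yielding total runtime $O(nd\log n)$. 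Let $i^\star$ be the largest $i$ with $N_i \leq C k\Lambda$ for a constant $C$ to be fixed, and output the set $S$ consisting of one arbitrary representative per nonempty cell of $\varphi_{i^\star}$.

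For the analysis, the size bound $|S| \leq C k\Lambda = k\cdot\poly(d)\exp(O(d/\beta^{2/3}))$ is immediate, and $S$ is an $\ell_{i^\star}$-covering of $P$ because $\diam(\varphi_{i^\star}^{-1}(\cdot)) \leq \ell_{i^\star}$; it therefore suffices to show $\ell_{i^\star} \leq 2\beta\opt$ with the required probability. Let $i_0$ be the smallest $i$ with $\ell_{i_0} \leq 2\beta\opt$, which lies in $\{0,\dots,T\}$ since $\ell_0 = \beta R \geq \beta\opt$ and $\ell_T \leq \beta\opt$; the factor-$2$ geometric spacing moreover forces $\ell_{i_0} \geq \beta\opt$ (either $i_0 = 0$ and $\ell_0 \geq \beta\opt$, or $\ell_{i_0-1} > 2\beta\opt$ and hence $\ell_{i_0} > \beta\opt$). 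Fixing optimal centers $c_1^\star, \dots, c_k^\star$, the optimal covering $P \subseteq \bigcup_j B(c_j^\star, \opt)$ with $\opt \leq \ell_{i_0}/\beta$ gives
\[
N_{i_0} \;\leq\; \sum_{j=1}^k \bigl|\varphi_{i_0}(B(c_j^\star, \opt))\bigr|,
\]
and combining the set-inclusion monotonicity $|\varphi_{i_0}(B(c_j^\star,\opt))| \leq |\varphi_{i_0}(B(c_j^\star,\ell_{i_0}/\beta))|$ with the consistency property of \Cref{lemma:hash} and linearity of expectation yields $\E[N_{i_0}] \leq k\Lambda$. Markov's inequality with $C = 200$ then gives $N_{i_0} \leq C k\Lambda$ with probability $\geq 0.995$, so $i^\star \geq i_0$ and $\ell_{i^\star} \leq \ell_{i_0} \leq 2\beta\opt$. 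Together with the $1/\poly(n)$ failure of \Cref{lemma:lognapx}, the overall success probability is at least $0.991$ for sufficiently large $n$.

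The main obstacle I anticipate is reconciling the expectation-only consistency guarantee of \Cref{lemma:hash} with the high-probability bound required in the lemma statement. A naive union bound over all $O(\log n)$ candidate scales would cost multiplicative overhead (either in $\Lambda$ or in the failure probability); instead, the argument must pin down a \emph{single} scale $\ell_{i_0}$ at which to invoke Markov, and then observe separately that any smaller $\ell_i$ passing the $N_i \leq C k\Lambda$ test automatically yields a valid (in fact smaller) covering radius. A secondary technical point is that \Cref{def:hashing} states consistency only for balls of the specific radius $\ell/\Gamma$, so one must explicitly invoke the elementary set-inclusion monotonicity to apply it with the smaller radius $\opt$ arising from the optimal solution.
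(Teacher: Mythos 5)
Your proposal is correct and takes essentially the same approach as the paper: apply \Cref{lemma:lognapx} to get a coarse estimate of $\opt$, geometrically search over $O(\log n)$ scales, apply a consistent hash at each scale and take one representative per nonempty cell, bound the number of cells via the consistency guarantee applied to the $k$ optimal balls, and invoke Markov at the single critical scale near $\opt$ rather than union-bounding over all scales. The only cosmetic difference is that you compute all $N_i$ and pick the finest scale passing the size check, whereas the paper iterates from coarse cells to fine (equivalently small $\tau$ to large $\tau$) and halts at the first scale that passes; both yield the same $O(nd\log n)$ worst-case runtime and the same analysis, with the covering radius bounded by $O(\beta\opt)$ because the algorithm never needs to look past the critical scale. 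Your observation that the consistency property must be combined with set-inclusion monotonicity (to go from the ball of radius $\ell/\beta$ in the definition down to the ball of radius $\opt \le \ell/\beta$) is a point the paper uses implicitly without comment, and it is correct.
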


\Cref{lemma:cover_hash} allows us to compute a covering set whose size is exponential in the dimension (assuming $d\gg \beta$). To mitigate this, we apply the Johnson-Lindenstrauss (JL) transform~\cite{JL84}, using random projections to reduce the dimension of the input point set to $O(\log n)$. The JL Lemma is restated as follows.

\begin{lemma}[Johnson-Lindenstrauss Lemma] 
    \label{lemma:JL}
    Let $P \subseteq \RR^d$ be a set of $n$ points and $\epsilon \in (0, \frac{1}{2})$. Then there exists a map $f : P \rightarrow \RR^{d'}$ for some $d'=O(\epsilon^{-2}\log n)$ such that
    \[
        (1-\epsilon) \dist(x,y) \leq \dist(f(x),f(y)) \\\leq (1+\epsilon) \dist(x,y)
    \]
    for all $x,y\in P$.
    Moreover, the image $f(P)$ can be computed in $O(\eps^{-2} nd \log n)$ time with probability at least $1-{1}/{\poly(n)}$.
\end{lemma}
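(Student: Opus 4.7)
The plan is to apply the classical Gaussian random projection construction and push it through a union bound over all pairs of points in $P$. I would define the embedding $f : \RR^d \to \RR^{d'}$ as $f(x) = Ax$, where $A \in \RR^{d' \times d}$ has entries drawn i.i.d.\ from $\mathcal{N}(0, 1/d')$, and set the target dimension $d' = C \eps^{-2} \log n$ for a sufficiently large absolute constant $C$.

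The first key step is a single-vector concentration inequality. For any fixed nonzero $u \in \RR^d$, the coordinates of $Au$ are i.i.d.\ $\mathcal{N}(0, \|u\|_2^2 / d')$, so $d' \|Au\|_2^2 / \|u\|_2^2$ follows a chi-squared distribution with $d'$ degrees of freedom. A standard moment-generating-function computation then yields
\[
    \Pr\bigl[\,\bigl|\|Au\|_2^2 - \|u\|_2^2\bigr| > \eps \|u\|_2^2\,\bigr] \leq 2\exp(-c \eps^2 d')
\]
for some absolute constant $c > 0$, and choosing $C$ large enough makes this at most $1/n^{10}$.

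The second step is a union bound over the $\binom{n}{2}$ difference vectors $u = x - y$ for distinct $x, y \in P$. After the union bound, all pairwise squared distances in $P$ are preserved within a $(1 \pm \eps)$ factor with probability at least $1 - 1/\poly(n)$, and taking square roots (absorbing a constant factor into $\eps$) gives the claimed two-sided distance preservation.

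For the running time, computing $f(P)$ reduces to multiplying $A$ by each of the $n$ input vectors, taking $O(d d') = O(\eps^{-2} d \log n)$ time per point and $O(\eps^{-2} n d \log n)$ in total; sampling $A$ costs $O(d d')$ and is dominated. The main (admittedly minor) obstacle is calibrating the constants in the chi-squared tail bound so that $d' = O(\eps^{-2} \log n)$ already suffices to drive the per-pair failure probability below $1/\poly(n)$ with room to spare for the union bound; this is routine, and the rest of the argument is classical.
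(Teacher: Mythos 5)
The paper does not actually prove \Cref{lemma:JL}; it cites it as a known result (via \cite{JL84}) and uses it as a black box, so there is no paper proof to compare against. Your argument is the standard and correct one: the Gaussian random projection $f(x)=Ax$ with $A\in\RR^{d'\times d}$ having i.i.d.\ $\mathcal{N}(0,1/d')$ entries, a chi-squared concentration bound for a single difference vector, a union bound over the $\binom{n}{2}$ pairs, and the observation that squared-distance preservation within $(1\pm\eps)$ implies distance preservation after adjusting constants; the $O(\eps^{-2} n d \log n)$ time bound is exactly the cost of the $n$ dense matrix-vector products. This matches the classical argument the paper implicitly relies on, and no gaps remain beyond the routine constant-chasing you already flag.
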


Now we are ready to conclude the proof of \Cref{thm:intro_hash}.
\begin{proof}[Proof of \Cref{thm:intro_hash}]
For a generic point set $W\subseteq \mathbb{R}^d$, let $\opt(W)$ be the optimal \kCenter value on $W$.
The algorithm for \Cref{thm:intro_hash} goes as follows.
We first run \Cref{lemma:JL} with some constant $\epsilon = O(1)$ to obtain a mapping $f : \mathbb{R}^d \to \mathbb{R}^{d'}$
where $d' = O(\log n)$.
Let $P' := f(P)$ be the dataset in the target space after JL.
Then, we apply \Cref{lemma:cover_hash} on $P'$, to obtain an $O(\alpha\opt(P'))$-covering $S' \subseteq P'$ of $P'$.
Let $S := f^{-1}(S')$, and this is well-defined since $S'$ is a subset of $P'$ and $f$ is injective on $P$.
The algorithm returns $S$ as the covering.

The running time follows immediately from \Cref{lemma:cover_hash,lemma:JL}.
Next, we verify that $S$ is a desired covering. Conditioning on the success of \Cref{lemma:JL}, i.e.,
for every $x, y \in P$, $\dist(f(x), f(y)) \in (1 \pm \epsilon) \dist(x, y)$, we consider an arbitrary $x \in P$. Then
\begin{align*}
    \dist(x, S)
    &\leq (1 + \epsilon) \cdot \dist(f(x), S') \\
    &\leq (1 + \epsilon) \cdot O(\alpha \opt(P')) \\
    &\leq 2 (1 + \epsilon)^2 \cdot O(\alpha) \opt(P) \\
    &\leq O(\alpha) \opt(P),
\end{align*}
where the first inequality directly follows from the conditioned event,
and the third inequality from the claim that $\opt(P') \leq 2 (1 + \epsilon) \opt(P)$, which can be derived from the conditioned event as follows\footnote{In fact, one can show $\opt(P') \leq (1 + \epsilon)\opt(P)$, which has also been analyzed in, e.g., \cite{JiangKS24}.}.
 Consider a $2$-approximation $\widehat{C} \subseteq P$ of \kCenter on $P$ (for instance, consider the solution of Gonzalez's algorithm~\cite{Gonzalez85}),
and the condition implies $\opt(P') \leq \cost(P', f(\widehat{C}) )\leq 2 (1 + \epsilon) \opt$.
Finally, the failure probability follows from a union bound of the failure of \Cref{lemma:cover_hash,lemma:JL}.
This finishes the proof.
\end{proof}

\subsection{Proof of \Cref{lemma:cover_hash}}
\label{sec:proof_cover_hash}

\noindent\textbf{Proof overview.\ }
The covering construction is based on consistent hashing (see \Cref{def:hashing}).
Consider the $k$ clusters $C_1^*, \ldots, C_k^*$ in an optimal solution, then by definition $\bigcup_i C_i^* = P$ and $\diam(C_i^*) \leq 2\opt$ for all $C_i^\ast$.
Roughly speaking, the key property of a consistent hashing $\hash$,
is that each $C_i^*$ is mapped to $|\hash(C_i^*)| \leq \Lambda$ distinct buckets,
and that each bucket has diameter $O(\alpha \opt)$, where $\Lambda$ is a parameter of the hashing and we have $\Lambda = \poly(d)\cdot \exp(O(d / \beta^{\frac23}))$ in our construction (\Cref{lemma:hash}).
Then, picking an arbitrary point from every non-empty bucket yields an $O(\alpha \opt)$-covering of size $k \Lambda$.
This hash $\hash$ is data-oblivious and we can evaluate $\hash(x)$ for every $x\in \RR^d$ in $O(d)$ time,
which leads to a $\tilde{O}(n)$ running time of \Cref{lemma:cover_hash}.

\begin{algorithm}[t]
    \caption{Covering based on consistent hashing}
    \label{alg:main_hash}
    \begin{algorithmic}[1]
\STATE $\apx\leftarrow$ a 
           $\gamma$-approximate of $\kCenter(P)$ using \Cref{lemma:lognapx}, where $\gamma = \poly(n)$.
            \alglinelabel{line:coarse_approx}
\STATE $t_\beta \gets \poly(d)\cdot\exp(O(d / \beta^{2/3}))$ (as in \Cref{lemma:hash})

\FOR{$i=0$ to $\left\lceil \log\gamma\right\rceil$}
                \STATE $\tau\leftarrow \frac{\apx}{\gamma}\cdot2^{i}$
                \STATE let $\hash_\tau$ be a $(\beta, t_\beta,\beta\tau )$-hashing sampled using \Cref{lemma:hash}
                \alglinelabel{line:hash}
                \STATE for each $z \in \hash_\tau(P)$,
                pick an arbitrary \emph{representative} point $\rep(z)$ from the bucket $\hash_\tau^{-1}(z) \cap P$
                \STATE let $S_\tau \gets \{ \rep(z) : z \in \hash_\tau(P) \}$
                \alglinelabel{line:Stau}
\STATE if $\left|S_\tau \right| \leq 200 kt_\beta$
                    then \textbf{return} $S_\tau$ 
                \alglinelabel{line:Stau_ub}
\ENDFOR
\end{algorithmic}
\end{algorithm}

\noindent\textbf{Algorithm.\ }
The algorithm is listed in \Cref{alg:main_hash}.
Let $\opt$ denote the cost of the optimal solution to the \kCenter on $P$.
The algorithm starts by finding a $\poly(n)$-approximation to $\opt$ (using \Cref{lemma:lognapx}). It then checks $O(\log n)$ geometrically increasing values of $\tau$, one of which estimates $\opt$ up to a factor of $2$.
For each value $\tau$, we pick a consistent hash $\varphi_\tau$ as in \Cref{lemma:hash}, with scale parameter $\beta\cdot \tau$, such that  
the points of every ball of radius $\frac{\beta\cdot \tau}{\beta}=\tau$ 
are hashed into only $t_\beta=\poly(d)\cdot\exp(O(d / \beta^{2/3}))$ cells in expectation.
For each hash $\varphi_\tau$, the algorithm computes a set $S_\tau$, containing a single representative from every nonempty hash cell.
Once $|S_\tau|\le 10k\cdot t_\beta$, the algorithm halts and returns $S_\tau$.

Consider an estimate $\tau$ such that $\frac\tau2<\opt\le\tau$.
The points in $P$ are contained in $k$ balls of radius $\opt<\tau$ (around the centers in the optimal solution).
Under $\hash_\tau$, the points within each of these balls are hashed into only $t_\beta$ cells in expectation. This implies that, with high constant probability, $|S_\tau|\le 200k\cdot t_\beta$, leading the algorithm to halt and return $S_\tau$. We now proceed with a formal proof.

\begin{lemma}
    \label{lemma:H_covering}
For every $\tau$, the set $S_\tau$ is a $(\beta \tau)$-covering for $P$ (with probability $1$).
\end{lemma}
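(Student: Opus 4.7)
The plan is to derive the covering property directly from the diameter guarantee of the consistent hashing $\hash_\tau$, without using any probabilistic argument; the guarantee on $|S_\tau|$ is the part that needs randomness (and will be handled elsewhere), while the covering property holds deterministically.

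First I would fix an arbitrary $p \in P$ and set $z := \hash_\tau(p)$. Since $p \in \hash_\tau^{-1}(z) \cap P$, this bucket is nonempty, so $z \in \hash_\tau(P)$ and the representative $\rep(z)$ is well-defined and belongs to $S_\tau$ by the construction in \Cref{alg:main_hash}. By construction, $\rep(z)$ also lies in $\hash_\tau^{-1}(z)$.

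Next I would invoke the diameter property of \Cref{def:hashing} applied to the $(\beta, t_\beta, \beta\tau)$-consistent hashing $\hash_\tau$ chosen in line~\ref{line:hash}: for every $z \in \RR^d$,
\begin{equation*}
  \diam\bigl(\hash_\tau^{-1}(z)\bigr) \leq \beta\tau.
\end{equation*}
Since both $p$ and $\rep(z)$ belong to $\hash_\tau^{-1}(z)$, we get $\dist(p, \rep(z)) \leq \beta\tau$, exhibiting a point of $S_\tau$ within distance $\beta\tau$ from $p$. As $p \in P$ was arbitrary, $S_\tau$ is a $(\beta\tau)$-covering of $P$.

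There is no real obstacle here: the statement is essentially a restatement of the diameter guarantee, and it is deterministic because the diameter property in \Cref{def:hashing} is a worst-case guarantee that holds for every realization of $\hash_\tau$, not just in expectation (only the consistency bound involved in controlling $|S_\tau|$ is in expectation). The only thing to be careful about is to distinguish the two clauses of \Cref{def:hashing} and to use the correct scale parameter $\ell = \beta\tau$ rather than $\tau$ when reading off the diameter bound.
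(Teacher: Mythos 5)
Your proof is correct and follows essentially the same approach as the paper: fix $p \in P$, let $z = \hash_\tau(p)$, observe that both $p$ and $\rep(z) \in S_\tau$ lie in the bucket $\hash_\tau^{-1}(z)$, and apply the worst-case diameter bound $\diam(\hash_\tau^{-1}(z)) \leq \beta\tau$. Your explicit remarks on why this is deterministic and on using the correct scale $\ell = \beta\tau$ are accurate and match the paper's implicit reasoning.
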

\begin{proof}
Clearly, $S_\tau \subseteq P$.
    Now, fix some $p \in P$, let $z := \hash_\tau(p)$.
    Then $\rep(z) \in S_\tau$,
    and by \Cref{def:hashing}, we have
    $\dist(p, S_\tau)
    \leq \dist(p, \rep(z))
    \leq \diam(\hash_\tau^{-1}(z)) \leq \beta \tau$.
    This verifies the definition of $\beta \tau$-covering.
\end{proof}

\begin{lemma}
    \label{lemma:lcovering_size}
    For $\tau \geq \opt$, $|S_\tau| \leq 200 k t_\beta$ with probability at least $0.995$ (over the randomness of $\hash_\tau$).
\end{lemma}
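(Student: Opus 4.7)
The plan is to bound $\E[|S_\tau|]$ using the consistency property of $\hash_\tau$ applied to the optimal clusters, and then invoke Markov's inequality.

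First, I would let $C_1^*, \ldots, C_k^*$ denote the clusters in an optimal \kCenter solution on $P$, with corresponding centers $c_1^*, \ldots, c_k^* \in \RR^d$, so that $C_i^* \subseteq B(c_i^*, \opt)$ and $P = \bigcup_{i=1}^k C_i^*$. Since $\tau \geq \opt$ by assumption, each $C_i^*$ lies inside the ball $B(c_i^*, \tau)$. Recalling that $\hash_\tau$ is a $(\beta, t_\beta, \beta\tau)$-consistent hashing (Line~\ref{line:hash}), its consistency guarantee in \Cref{def:hashing} with scale $\ell = \beta\tau$ says precisely that
\[
    \E\bigl[\bigl|\hash_\tau\bigl(B(c_i^*, \tau)\bigr)\bigr|\bigr] = \E\bigl[\bigl|\hash_\tau\bigl(B(c_i^*, \tfrac{\beta\tau}{\beta})\bigr)\bigr|\bigr] \leq t_\beta.
\]

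Next, since $\hash_\tau(C_i^*) \subseteq \hash_\tau(B(c_i^*, \tau))$, we obtain $\E[|\hash_\tau(C_i^*)|] \leq t_\beta$. By construction (Line~\ref{line:Stau}), $|S_\tau| = |\hash_\tau(P)|$, and $\hash_\tau(P) = \bigcup_{i=1}^k \hash_\tau(C_i^*)$, so
\[
    \E[|S_\tau|] \;\leq\; \sum_{i=1}^k \E\bigl[\bigl|\hash_\tau(C_i^*)\bigr|\bigr] \;\leq\; k \, t_\beta.
\]
Finally, Markov's inequality yields
\[
    \Pr\bigl[|S_\tau| > 200 k t_\beta\bigr] \;\leq\; \frac{\E[|S_\tau|]}{200 k t_\beta} \;\leq\; \frac{1}{200} = 0.005,
\]
so $|S_\tau| \leq 200 k t_\beta$ with probability at least $0.995$, as claimed.

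I don't anticipate a real obstacle here, as the argument is essentially a one-shot application of the expected-consistency guarantee together with linearity of expectation and Markov's inequality; the only subtlety worth a sentence in the write-up is that the definition of consistent hashing is stated for balls centered at arbitrary points of $\RR^d$, so using the optimal centers $c_i^*$ (rather than data points) poses no issue.
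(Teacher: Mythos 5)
Your proof is correct and follows essentially the same route as the paper's: cover $P$ by the $k$ optimal balls of radius $\opt \le \tau$, apply the expected-consistency guarantee of $\hash_\tau$ to each ball, sum by linearity of expectation to get $\E[|S_\tau|] \le k t_\beta$, and finish with Markov. Your closing remark about the consistency guarantee holding for balls centered at arbitrary points of $\RR^d$ is a valid and worthwhile observation, since the optimal centers need not lie in $P$.
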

\begin{proof}

Let $C^* = \{c_1^*, ..., c_k^*\}$ be an optimal solution for \kCenter.
    Then $P$ can be covered by the $k$ balls of radius $\opt$ around $c_j^*$'s, i.e., 
    $P = \bigcup_{j=1}^k B_P(c_j^*, \opt)$.
    As $\opt\le\tau$, by linearity of expectation, it holds that 
\begin{align*}
        \mathbb{E}\left[\left|S_{\tau}\right|\right] & =\mathbb{E}\left[\left|\hash_{\tau}(P)\right|\right]
        =\mathbb{E}\Big[\Big|\hash_{\tau}(\bigcup_{j=1}^{k}B_{P}(c_{j}^{*},\opt))\Big|\Big]\nonumber\\
 & \le\sum_{j=1}^{k}\mathbb{E}\left[\left|\hash_{\tau}(B_{P}(c_{j}^{*},\opt))\right|\right]\le k\cdot t_{\beta}~.
\end{align*}
By Markov's inequality, $\Pr[|S_\tau| > 200 k t_\beta] \leq 0.005$.
\end{proof}

\begin{proof}[Proof of \Cref{lemma:cover_hash}]

    We define the two  following events:
    \begin{itemize}[noitemsep,topsep=0pt]
        \item $\mathcal{E}_{\rm apx}$: the event that the \kCenter approximation algorithm in \Cref{lemma:lognapx} succeeds: $\opt\le\apx\le\gamma\cdot\opt$.
        \item $\mathcal{E}_{\rm hash}$: the event that for $\tau$ such that $\opt \le \tau < 2\opt$, it holds that
        $|S_\tau| \leq 200 k t_\beta$.
    \end{itemize}
    By \Cref{lemma:lognapx} and \Cref{lemma:lcovering_size}, with probability at least $0.991$, events $\mathcal{E}_{\rm apx}$ and $\mathcal{E}_{\rm hash}$ both happen.
    We now condition on both events and, in the rest of the proof, argue that the algorithm succeeds. \Cref{lemma:cover_hash} will then follow.

\noindent\textbf{Covering property.\ }
    The algorithm iterates over different values of $\tau$, starting at $\tau_0=\frac{\apx}{\gamma}\le\opt$, and increase $\tau$ in jumps of $2$, with the maximum value being $\tau_{\lceil\log \gamma\rceil}=\frac{\apx}{\gamma}\cdot2^{\lceil\log \gamma\rceil}\ge\apx\ge\opt$.
    Let $\tau'$ be the estimate such that $\opt\le\tau'<2\opt$.
    If the algorithm will reach the estimate $\tau'$, then as we conditioned on $\mathcal{E}_{\rm hash}$, the algorithm will halt and return $S_{\tau'}$.
    Otherwise, the algorithm will halt earlier at some value $\tau\le\frac12\cdot\tau'\le\opt$.
    In either case, by \Cref{lemma:H_covering}, the algorithm returns a set $S_\tau$, which is a $(\beta\tau)$-covering for $P$. Note that $\beta\tau\le2\beta\cdot\opt$.

    \noindent\textbf{Running time.\ }
    The invocation of \Cref{lemma:lognapx} in Line~\ref{line:coarse_approx}
    only takes $O(nd + n\log n)$ time.
The for-loop runs  at most $O(\log n)$  times,
    and in each iteration, it takes $O(n d)$ time to evaluate all hash values $\hash(P)$.
In summary, the overall running time of the algorithm is $O(n d \log n)$.
This finishes the proof of \Cref{lemma:cover_hash}.
\end{proof}

 \section{Constructing Covering via Sampling}
\label{sec:cover_sample}

We now prove \Cref{thm:intro_sample} (restated below).
\thmintrosample*

The proof is similar to that of \Cref{thm:intro_hash},
using a reduction to covering (\Cref{lemma:cover_to_approx}).
Hence, the remaining step is to find a suitable covering for \Cref{thm:intro_sample},
which is stated in the following lemma.
The lemma relies on an approximate nearest neighbor search (ANN)
structure,
where given a set of input points $T$ and a query point $q$,
the $\alpha$-ANN finds for each a point $x \in T$ such that $\dist(x, q) \leq \alpha \cdot\dist(q, T)$.

\begin{lemma}
    \label{lemma:cover_sample}
    Given a set of points $P\subseteq \RR^d$, and $k,\beta\ge1$, there is an algorithm that runs in $\tilde{O}(n k^{1/\beta^2})$ time, and with probability at least $0.991$ returns a set $S\subseteq P$ of $ k\cdot \polylog (n)$ points such that $S$ is an $O(\beta\opt)$-covering of $P$.
\end{lemma}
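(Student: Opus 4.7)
The plan is to build the covering by iterated uniform sampling combined with a $\beta$-ANN data structure, in the spirit of the bi-criteria sampling technique common for $k$-median/$k$-means. First I would invoke \Cref{lemma:lognapx} to obtain a $\gamma$-approximation $\apx$ to $\opt$ with $\gamma = \poly(n)$, and then enumerate over an $O(\log n)$-size geometric sequence of guesses $\tau \in \{\apx/\gamma,\ 2\apx/\gamma,\ 4\apx/\gamma,\ \ldots,\ \apx\}$, one of which satisfies $\opt \leq \tau \leq 2\opt$. For each guess $\tau$, I run the following subroutine in ascending order of $\tau$ and return the $S$ produced by the first guess that succeeds. Initialize $P_0 := P$ and $S := \emptyset$; at iteration $i = 0, 1, \ldots, O(\log n)$, draw a uniform sample $R_i \subseteq P_i$ of size $\Theta(k \log n)$, set $S \leftarrow S \cup R_i$, build a $\beta$-ANN on $R_i$ (via standard LSH, giving preprocessing $\tilde{O}(k^{1+1/\beta^2})$ and query time $\tilde{O}(k^{1/\beta^2})$), and for each $p \in P_i$ query the ANN; if the returned neighbor lies within distance $2\beta\tau$ of $p$, remove $p$, yielding $P_{i+1}$. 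Declare success if $P_i$ becomes empty within the iteration budget.

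The key analysis is a standard ``half-pruning'' argument for the correct guess $\tau \geq \opt$. Call an optimal cluster \emph{big} at iteration $i$ if it intersects $P_i$ in at least $|P_i|/(2k)$ points. A fixed big cluster is missed by $R_i$ with probability at most $(1 - 1/(2k))^{|R_i|} \leq 1/\poly(n)$, so by a union bound over the $\leq k$ big clusters and the $O(\log n)$ iterations, every big cluster is hit in its own iteration with high probability. Any $p$ in a hit big cluster has $\dist(p, R_i) \leq 2\opt \leq 2\tau$ (since the cluster has diameter $\leq 2\opt$), so the $\beta$-ANN returns a neighbor within true distance $\leq 2\beta\tau$ and $p$ is removed. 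The points that survive the iteration all lie in small clusters, whose total size is at most $k \cdot |P_i|/(2k) = |P_i|/2$, so $|P_{i+1}| \leq |P_i|/2$ and the procedure empties $P_i$ within $O(\log n)$ rounds. Every removed point has some point of $S$ within true distance $\leq 2\beta\tau \leq 4\beta\opt$, hence $S$ is an $O(\beta\opt)$-covering of size $O(k \log^2 n) = k \cdot \polylog(n)$. Per iteration the cost is $\tilde{O}(k^{1+1/\beta^2}) + |P_i|\cdot\tilde{O}(k^{1/\beta^2}) = \tilde{O}(nk^{1/\beta^2})$ using $k \leq n$, and summing over $O(\log n)$ iterations and $O(\log n)$ guesses stays within $\tilde{O}(nk^{1/\beta^2})$.

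The main obstacle I anticipate is securing a black-box Euclidean $\beta$-ANN with the preprocessing/query trade-off above; this is provided by a standard LSH construction (e.g., Andoni--Indyk), whose exponent $\rho = 1/\beta^2 + o(1)$ fits inside the $\tilde{O}$ notation (which already hides $\poly(d \log n)$). A secondary subtlety is that for ``wrong'' guesses $\tau < \opt$ the halving step can fail; this is harmless since we simply detect that $P_i$ is still nonempty after the iteration budget and move on to the next $\tau$, paying only an extra $O(\log n)$ factor that is already absorbed in the final bound. Tuning the hidden constants in the sample size $|R_i|$ and in the failure bounds of \Cref{lemma:lognapx} then brings the overall success probability up to $0.991$.
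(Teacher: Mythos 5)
Your proposal is correct and follows essentially the same route as the paper's proof: a coarse $\poly(n)$-approximation via \Cref{lemma:lognapx}, a geometric sweep over $O(\log n)$ guesses $\tau$, repeated uniform sampling of $O(k\log n)$ points, pruning via a $\beta$-ANN (Andoni--Indyk), and the big-vs-small cluster dichotomy showing the uncovered set halves each round. The only cosmetic difference is in how the halving is amplified over rounds: you take a union bound to argue that \emph{every} round halves the set with high probability (which implicitly requires conditioning on the history, since ``big'' is defined relative to the current residual set), whereas the paper more conservatively shows each round halves with probability at least $1/2$ and then applies a binomial tail bound over $5\lceil\log n\rceil$ rounds; both lead to the same conclusion.
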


Note that \Cref{thm:intro_sample} follows directly from \Cref{lemma:cover_sample,lemma:cover_to_approx}.
The rest of this section proves \Cref{lemma:cover_sample}.

\noindent\textbf{Proof overview for \Cref{lemma:cover_sample}.\ }
Our proof is based on random hitting sets.
For the sake of presentation, assume that the $k$ clusters in an optimal solution are of similar size $\Theta(\frac nk)$.
Then a uniform sample $S$ of size $O(k \log n)$ would hit all clusters w.h.p. .
Furthermore, by the definition of \kCenter, the entire dataset is contained in a $(2\opt)$-neighborhood of $S$. This $(2\opt)$-neighborhood of $S$ gives the covering and can be computed using ANN.
The general case where the clusters are not balanced can be handled similarly.
Specifically, for a random sample $S$ of $O(k\log n)$ points, at least half of the points will, with high probability, be within a distance of $2\opt$ from $S$. We can eliminate these points, and repeat this process for $O(\log n)$ rounds to cover all points.

\noindent\textbf{Algorithm.\ }
The algorithm (\Cref{alg:main_sample}) begins by computing a $\poly(n)$-approximation to $\opt$, denoted by $\apx$.
Then it checks $O(\log n)$ geometrically increasing values of $\tau$, one of which estimates $\opt$ up to a factor of $2$.
For each value of $\tau$, the algorithm attempts to construct a coreset $S_\tau$ such that for every point $x\in P$, $\dist(x, S_\tau) \leq 2\beta \tau$.
In more detail, a set $Q$ of uncovered points is maintained. The process consists of $L=O(\log n)$ iterations, where in each iteration, $O(k\log n)$ random (uncovered) points from $Q$ are added to $S_\tau$. 
ANN is then invoked at Line~\ref{line:ANN}, using the newly sampled points as the input set and the uncovered points in $Q$ as queries, where we use the ANN algorithm of~\cite{AndoniI06}.
Every point whose $\beta$-approximate nearest neighbor is within a distance of at most $2\beta\tau$ is subsequently removed from $Q$. 
If $Q$ becomes empty during the $O(\log n)$ iterations, the algorithm returns $S_\tau$.

\begin{algorithm}[t]
	\caption{Covering based on sampling}
	\label{alg:main_sample}
	\begin{algorithmic}[1]
		\STATE $\apx\leftarrow$ a 
		$\gamma$-approximate of $\kCenter(P)$ using \Cref{lemma:lognapx}, where $\gamma = \poly(n)$
\FOR{$i\gets0$ to $\left\lceil \log\gamma\right\rceil$}
\STATE $\tau\leftarrow\frac{\apx}{\gamma}\cdot 2^i$,~~ $S_\tau \gets \emptyset$,~~ $Q^{(0)} \gets P$
		\FOR{$j \gets 1$ to $L=5\lceil\log n\rceil$} \alglinelabel{line:sample_for_loop}
		\STATE draw $O(k \log n)$ uniform samples $S^{(j)}$ with replacement from $Q^{(j - 1)}$
\STATE for each $x \in Q^{(j-1)}$, 
        compute $\distEst(x, S^{(j)})$ such that
        $ \distEst(x, S^{(j)}) \leq \beta \dist(x, S^{(j)})$ using ANN
\alglinelabel{line:ANN}
		\STATE $R^{(j)} \gets \{ x \in Q^{(j - 1)} : \distEst(x,S^{(j)}) \leq 2 \beta \tau \}$
\alglinelabel{line:sample_remove}
		\STATE 
		$S_\tau \gets S_\tau \cup S^{(j)}$,~~ $Q^{(j)} \gets Q^{(j - 1)} \setminus R^{(j)}$
		\STATE if $Q^{(j)} = \emptyset$ 
		then \textbf{return} $S_\tau$
        \alglinelabel{line:IfQLempty}
\ENDFOR
		\ENDFOR
	\end{algorithmic}
\end{algorithm}

For the analysis, consider $\tau\ge\opt$, and $Q\subseteq P$. Since $Q$ can be covered by $k$ balls of radius $\tau$, 
by an averaging argument, at least half of the points in $Q$ must belong to the balls that each contains at least a $\frac{1}{2k}$ fraction of $Q$.
Consequently, each such point will, with constant probability, be within a distance of at most $2\tau$ from the sampled points and will thus be removed from $Q$.
It follows that $Q$ is expected to shrink in size by a constant factor in each iteration and, after $O(\log n)$ iterations, becomes empty.
The running time is dominated by the executions of ANNs.
The next lemma is the main technical guarantee of \Cref{alg:main_sample}.

\begin{restatable}{lemma}{lemmasamplegoodtau}
\label{lemma:sample_good_tau}
    Suppose that $\tau \geq \opt$. With probability at least $1-1/n$, there exists $j\leq L$ such that $Q^{(j)} = \emptyset$ at Line~\ref{line:IfQLempty}.
\end{restatable}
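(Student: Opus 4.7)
The plan is to prove that, with high probability, $|Q^{(j)}|$ halves in every iteration of the inner for-loop in which $Q^{(j-1)}\neq \emptyset$. Since $|Q^{(0)}|=n$ and $L=5\lceil \log n\rceil$, after $L$ successful halvings we have $|Q^{(L)}| < n/2^L < 1$, forcing $Q^{(L)}=\emptyset$ (in fact $Q^{(j)}$ becomes empty even earlier, and the algorithm returns at Line~\ref{line:IfQLempty}). Concretely, I will prove the single-step statement: for every fixed $j$, conditioned on the history up through iteration $j-1$ and on $Q^{(j-1)}\neq\emptyset$, we have $|Q^{(j)}|\leq |Q^{(j-1)}|/2$ with probability at least $1-n^{-4}$ over the fresh sample $S^{(j)}$. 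A union bound over the at most $L$ iterations then gives total failure probability at most $L\cdot n^{-4}\leq 1/n$, which is what the lemma requires.

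The single-step claim is proved by an averaging argument over the optimal clusters. Let $C_1^*,\dots,C_k^*$ be the clusters induced by an optimal $k$-Center solution on $P$; each has diameter at most $2\opt\leq 2\tau$. Write $Q := Q^{(j-1)}$ and call a cluster $C_i^*$ \emph{heavy} if $|C_i^*\cap Q|\geq |Q|/(2k)$. The non-heavy clusters together contain at most $k\cdot |Q|/(2k)=|Q|/2$ points of $Q$, so heavy clusters contain at least $|Q|/2$ points of $Q$. The key observation is that if $S^{(j)}$ contains at least one sample from $C_i^*\cap Q$ for every heavy $C_i^*$, then every heavy point $x\in C_i^*\cap Q$ has some $s\in S^{(j)}$ inside the same $C_i^*$, so $\dist(x,s)\leq 2\opt\leq 2\tau$, hence by the ANN guarantee $\distEst(x,S^{(j)})\leq \beta\dist(x,S^{(j)})\leq 2\beta\tau$, and $x$ is placed in $R^{(j)}$ at Line~\ref{line:sample_remove}. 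Consequently $|Q^{(j)}|\leq |Q|/2$.

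Finally, I bound the probability that $S^{(j)}$ misses some heavy cluster. For a fixed heavy $C_i^*$ the probability that none of the $m=\Theta(k\log n)$ i.i.d.\ samples drawn with replacement from $Q$ lands in $C_i^*\cap Q$ is at most $(1-1/(2k))^m\leq \exp(-m/(2k))$. Choosing the hidden constant in $m$ large enough (e.g.\ $m\geq 10k\ln n$) makes this at most $n^{-5}$. A union bound over the at most $k\leq n$ heavy clusters yields a per-iteration failure probability of at most $n^{-4}$, as promised.

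This is a bookkeeping-style argument rather than one with a single hard step, but two small points deserve care. First, the notion of "heavy" is redefined each round relative to the current $Q^{(j-1)}$, so the single-step argument must be applied conditionally on the history of previous samples; this is fine because the fresh sample $S^{(j)}$ is independent of that history. Second, sampling is with replacement, so one should use the hitting-probability bound $(1-|C_i^*\cap Q|/|Q|)^m$ directly, as above, rather than any coupon-collector-style reasoning. Neither point presents a genuine obstacle, and the main work of the proof is really the heavy/non-heavy decomposition combined with the triangle-inequality observation that a single same-cluster representative suffices for removal.
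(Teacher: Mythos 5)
Your proof is correct and follows essentially the same route as the paper: an averaging argument over the heavy (large) clusters of an optimal solution restricted to $Q^{(j-1)}$, a hitting-set bound for the $\Theta(k\log n)$ samples, and the triangle-inequality observation that one same-cluster representative suffices to put a point in $R^{(j)}$. The only difference is cosmetic and lies in the aggregation over rounds: you establish a per-round halving probability of $1 - n^{-\Theta(1)}$ and union-bound over the $L$ iterations, whereas the paper settles for a per-round bound of $1/2$ and applies a binomial tail estimate to guarantee at least $\lceil\log n\rceil$ halvings out of $L = 5\lceil\log n\rceil$ rounds.
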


\begin{proof}
    Fix some iteration $j$ in the inner for-loop (Line~\ref{line:sample_for_loop}).
    We claim that 
    \begin{equation}\label{eqn:QjShrink}
        \Pr\Big[|Q^{(j)}|\le\frac12\cdot |Q^{(j-1)}|\Big]\ge\frac12~.
    \end{equation}
    That is, in any given iteration, the size of $Q^{(j)}$ decreases by a factor of at least $2$ with probability at least $\frac12$.
    Over the $L$ iterations of the for-loop, if this size reduction occurs in at least $\log n$ iterations, then $Q^{(j')}$ will be the empty set for some $j'\le L$.
    Assuming that inequality (\ref{eqn:QjShrink}) holds, the probability that the size reduction occurs less than $\log n$ times is negligibly small:
    \begin{align*}
         & \Pr\left[Q^{(L)}\ne\emptyset\right]\\
         & \le\sum_{q=0}^{\left\lceil \log n\right\rceil }{L \choose q}\cdot\frac{1}{2^{L}}
         \overset{(*)}{\le}\frac{1}{2^{L}} \left(\frac{L\cdot e}{\left\lceil \log n\right\rceil }\right)^{\left\lceil \log n\right\rceil }\\
         & =\frac{1}{32^{\left\lceil \log n\right\rceil }}\cdot\left(5e\right)^{\left\lceil \log n\right\rceil } < \frac{1}{2^{\left\lceil \log n\right\rceil }}\le\frac{1}{n}~,
    \end{align*}
    where $^{(*)}$ follows from the bound $\sum_{i=0}^k {n\choose i}\le(\frac{en}{k})^k$ (see, e.g., Exercise 0.0.5 in \cite{vershynin18}).
    
    It remains to prove inequality (\ref{eqn:QjShrink}).
    Let $\{ c_1^*, \ldots, c_k^* \}$ be an optimal solution for \kCenter on $P$,
    thus $Q^{(j-1)}\subseteq P\subseteq\bigcup_{q = 1}^k B_P(c_q^*, \opt)$.
    Let $C_1,C_2,\dots,C_k$ be a partition of $Q^{(j-1)}$ such that $C_q\subseteq B_P(c_q^*, \opt)$ for all $q\in [k]$.
    In particular, $\diam(C_q) \leq 2\opt$ for each $C_q$.
    We say that a cluster $C_q$ is \emph{large} if $|C_q|\ge\frac{|Q^{(j-1)}|}{2k}$, and \emph{small} otherwise. Denote by ${\cal C}$ the set of large clusters.
    Note that the number of points in small clusters is at most $k\cdot\frac{|Q^{(j-1)}|}{2k}=\frac{|Q^{(j-1)}|}{2}$.
Let $\Psi$ be the event that the set $S^{(j)}$ of samples contains at least one point from each large cluster. By a union bound, the probability that $\Psi$ does not happen is 
    \begin{align*}
        \Pr\left[\overline{\Psi}\right] & =\Pr\left[\exists~\text{large cluster }C_{q}\text{ s.t. }C_{q}\cap S^{(j)}=\emptyset\right]\\
         & \le\sum_{C_{q}\in{\cal C}}\Pr\left[C_{q}\cap S^{(j)}=\emptyset\right]\\
         & =\sum_{C_{q}\in{\cal C}}\left(1-\frac{|C_{q}|}{|Q^{(j-1)}|}\right)^{O(k\cdot\log n)}\\
         & \le k\cdot\left(1-\frac{1}{2k}\right)^{O(k\cdot\log n)}=\frac{k}{n^{O(1)}}\le\frac{1}{n}~.
    \end{align*}
    Next, condition on $\Psi$.
    Then $S^{(j)}$ contains a point from $C_q$ for every large cluster $C_q$. Since each cluster has diameter at most $2\opt$ and more than half of the points belong to large clusters, it follows that there are at least $\frac{|Q^{(j-1)}|}{2}$ points in $Q^{(j-1)}$, each of which is within a distance of at most $2\opt\leq2\tau$ from some point in $S^{(j)}$.
    Our ANN will return with high probability, for each $x\in Q^{(j)}$, an estimate $\distEst(x,S^{(j)})$ such that $\distEst(x,S^{(j)})\leq \beta\dist(x,S^{(j)})$. 
It follows that $\distEst(x,S^{(j)})\le 2\beta\tau$ for all the large cluster points.
Consequently, all these points will be included in $R^{(j)}$ and thus
    $|Q^{(j)}|=\left|Q^{(j-1)}\setminus R^{(j)}\right|\le \frac{|Q^{(j-1)}|}{2}$, with probability at least $\frac12$, as claimed.
\end{proof}

\begin{proof}[Proof of \Cref{lemma:cover_sample}]
During the execution of the algorithm we construct an $\beta$-ANN structure $O(\log^2n)$ times, each with input size $O(k\log n)$.
    On each such data structure we perform at most $n$ queries. 
    Specifically, we use the $\beta$-ANN algorithm of  \cite{AndoniI06}, which takes $\tilde O(m^{1 + 1 / \beta^2})$ pre-processing time 
    and $\tilde O(m^{1 / \beta^2})$ query time to compute an $O(\beta)$-approximate NN for each query point in $\mathbb{R}^d$,
    over an $m$-point input set. This algorithm answers all $O(n)$ queries successfully with $1 - 1 / \poly(n)$ probability.
    Hence, the overall running time is $\tilde{O}(n\cdot k^{1/\beta^2})$ (where we set $m = O(k \log n)$).
    This also dominates all the other steps.

    Let $i\in[0,\lceil\log\gamma\rceil]$ be such that $\frac{\apx}{\gamma}\cdot 2^{i-1}<\opt\le \frac{\apx}{\gamma}\cdot 2^{i}$.
    If the algorithm terminates at iteration $i'<i$, then it has found a set $S_\tau$ which is a $\rho$-covering of $P$ for $\rho = 2\beta\cdot \frac{\apx}{\gamma}\cdot 2^{i'-1} < 2\beta\opt$,  
    as claimed.
    Otherwise, by \Cref{lemma:sample_good_tau}, with probability at least $1 - 1 / n$, Algorithm~\ref{alg:main_sample} would terminate at the $i$-th iteration and, in this case,  $S_\tau$ is a $\rho$-covering of $P$ for $\rho = 2\beta\cdot \frac{\apx}{\gamma}\cdot 2^{i} \leq 2\beta \cdot 2\opt = 4\beta\opt$.
\end{proof}

 \section{Experiments}

\begin{table}[t]
	\centering
	\caption{Specifications of datasets, where $d$ is the original data 
    dimension and $d'$ is the target dimension of the JL transform.}
    \vspace*{-8pt}
	\begin{small}
		\begin{tabular}{llll}
			\toprule
			dataset  & size (approx.) & $d$ & $d'$ \\
			\midrule
			Kddcup     & 5M & 38 & 30 \\
			Covertype & 581K   & 55 & 50\\
			Census   & 2M  & 69 & 60 \\
			Fashion-MNIST  & 70K & 784 & 100 \\
			\bottomrule
			\label{tab:dataset}
		\end{tabular}
	\end{small}
\end{table}

\begin{figure*}[tbp]
	\centering
	\begin{subfigure}{0.53\textwidth}
		\centering
		\includegraphics[width=\textwidth]{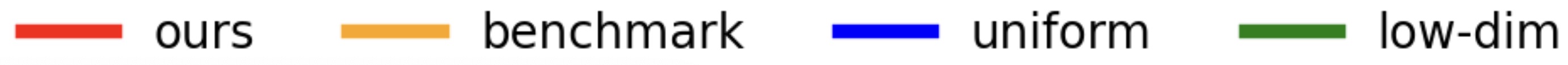}
\end{subfigure}
	\\
	\begin{subfigure}{0.24\textwidth}
		\centering
		\includegraphics[width=\textwidth]{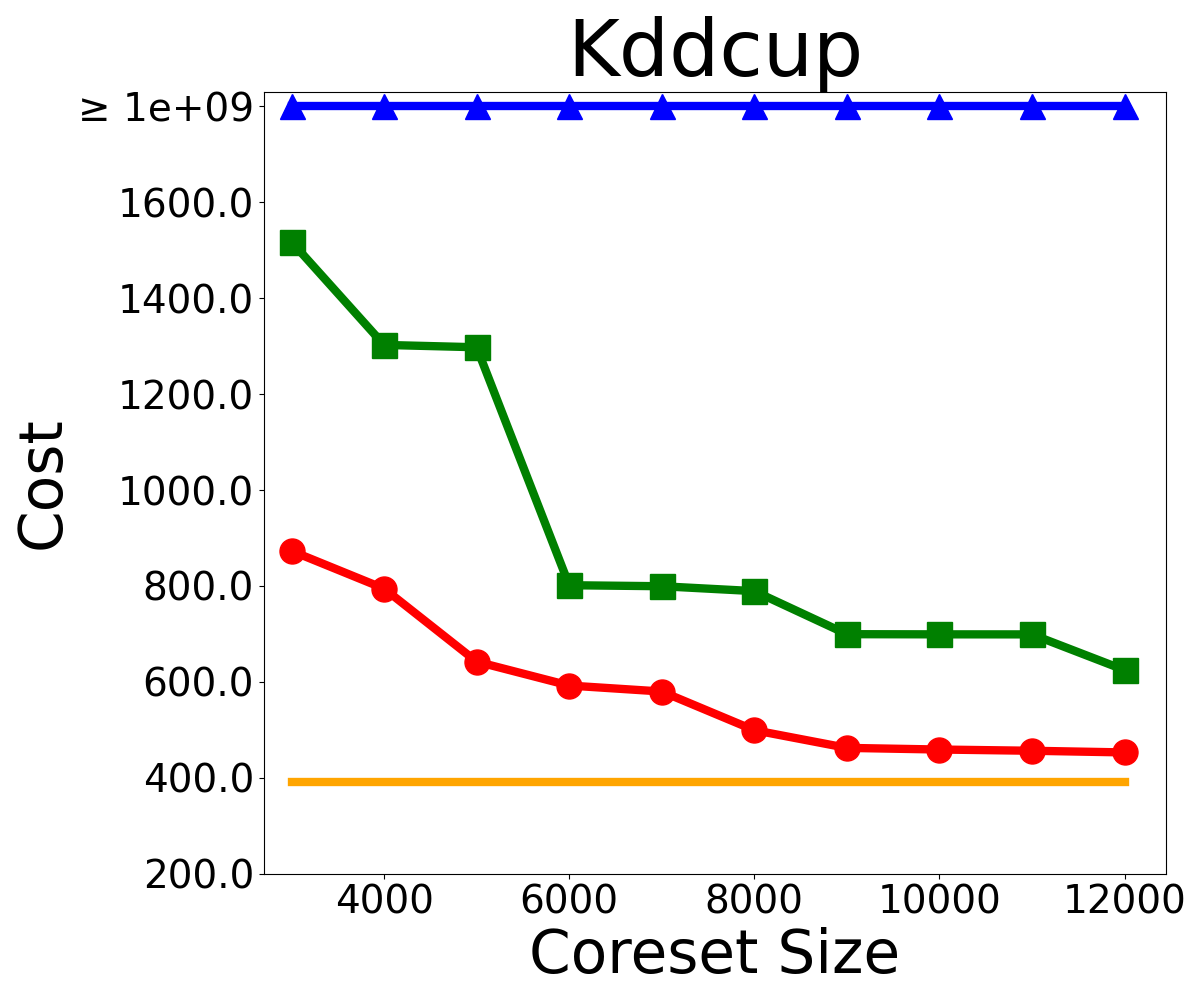}
\end{subfigure}
\begin{subfigure}{0.24\textwidth}
		\centering
		\includegraphics[width=\textwidth]{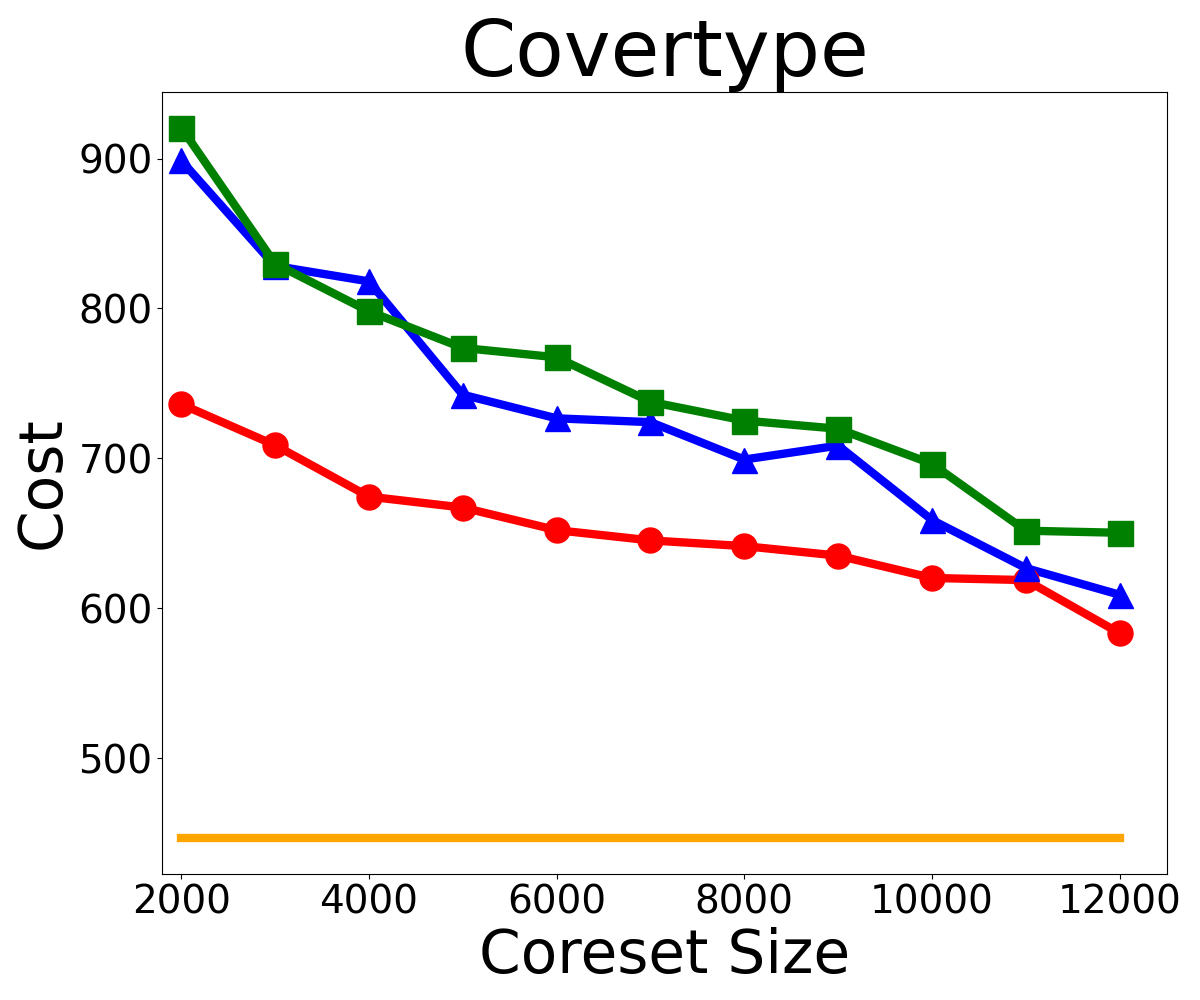}
\end{subfigure}
\begin{subfigure}{0.24\textwidth}
		\centering
		\includegraphics[width=\textwidth]{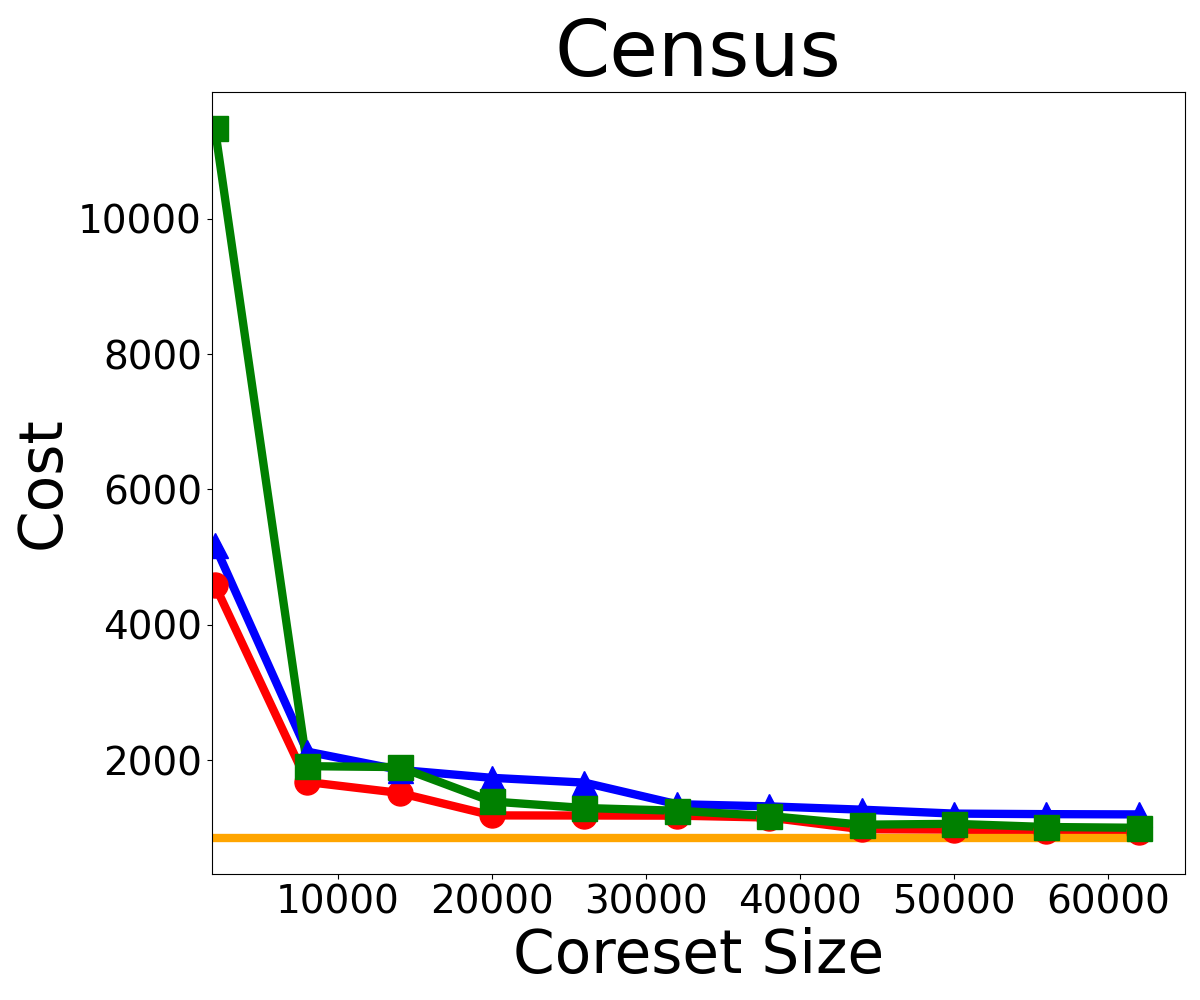}
\end{subfigure}
\begin{subfigure}{0.24\textwidth}
		\centering
		\includegraphics[width=\textwidth]{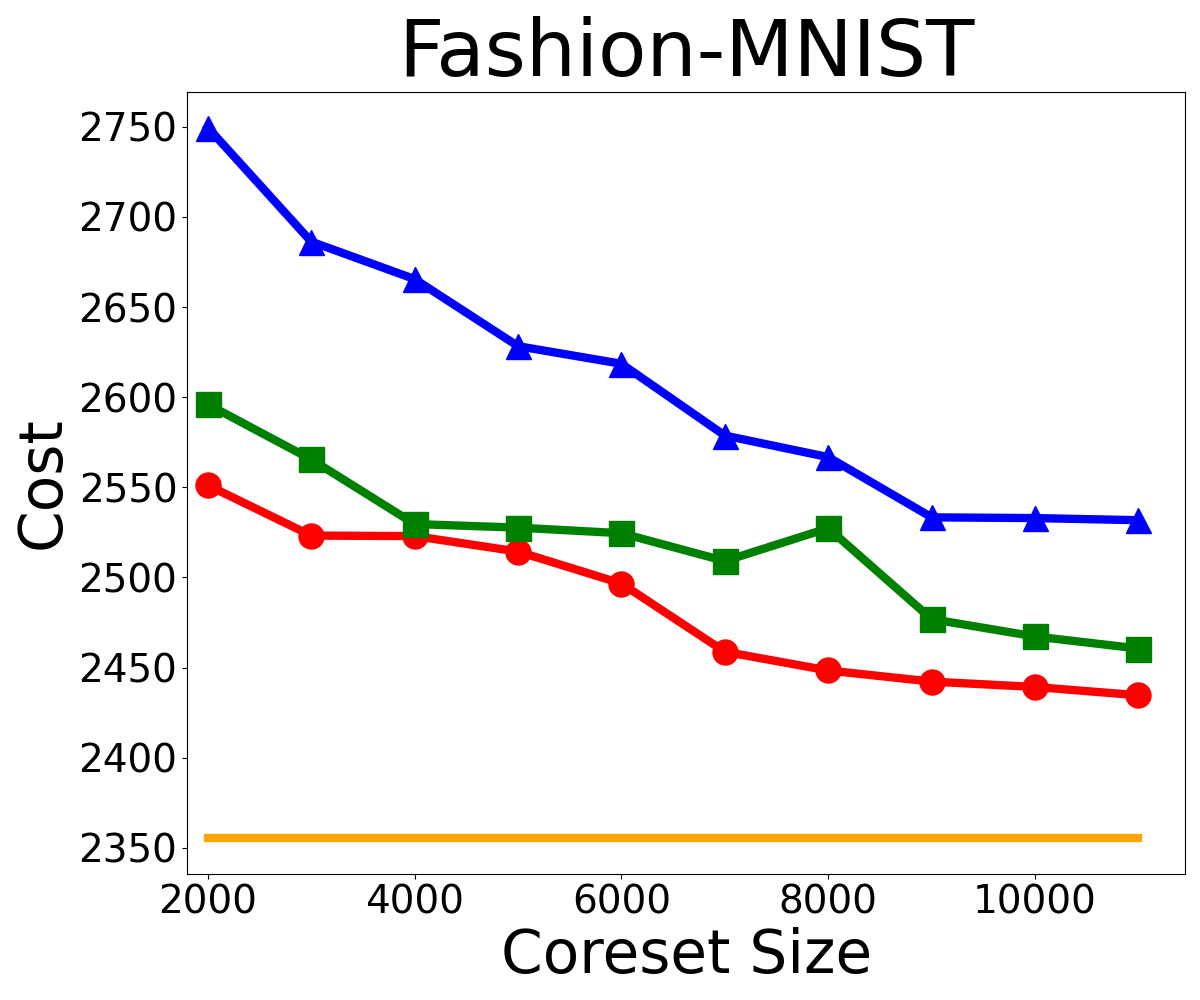}
\end{subfigure}
	\vspace*{-8pt}
	\caption{The trade-off between the coreset size and the \kCenter cost for all baselines in each dataset.}
\label{fig:cost_result}
\end{figure*}

\begin{figure*}[tbp]
\centering
\begin{subfigure}{0.55\textwidth}
	\centering
	\includegraphics[width=\textwidth]{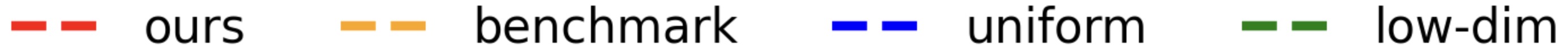}
\end{subfigure}
\\
\begin{subfigure}{0.24\textwidth}
	\centering
	\includegraphics[width=\textwidth]{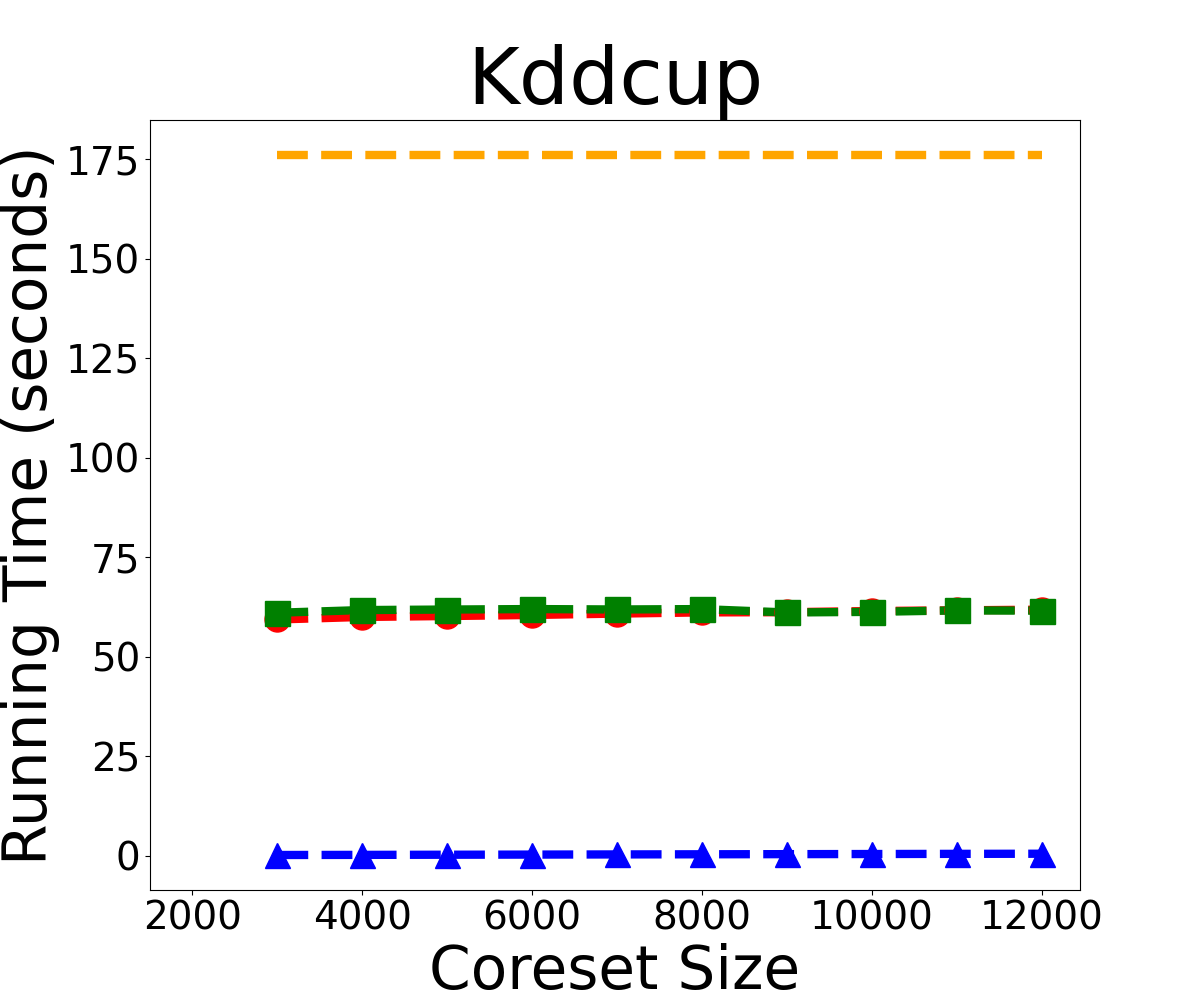}
\end{subfigure}
\begin{subfigure}{0.24\textwidth}
	\centering
	\includegraphics[width=\textwidth]{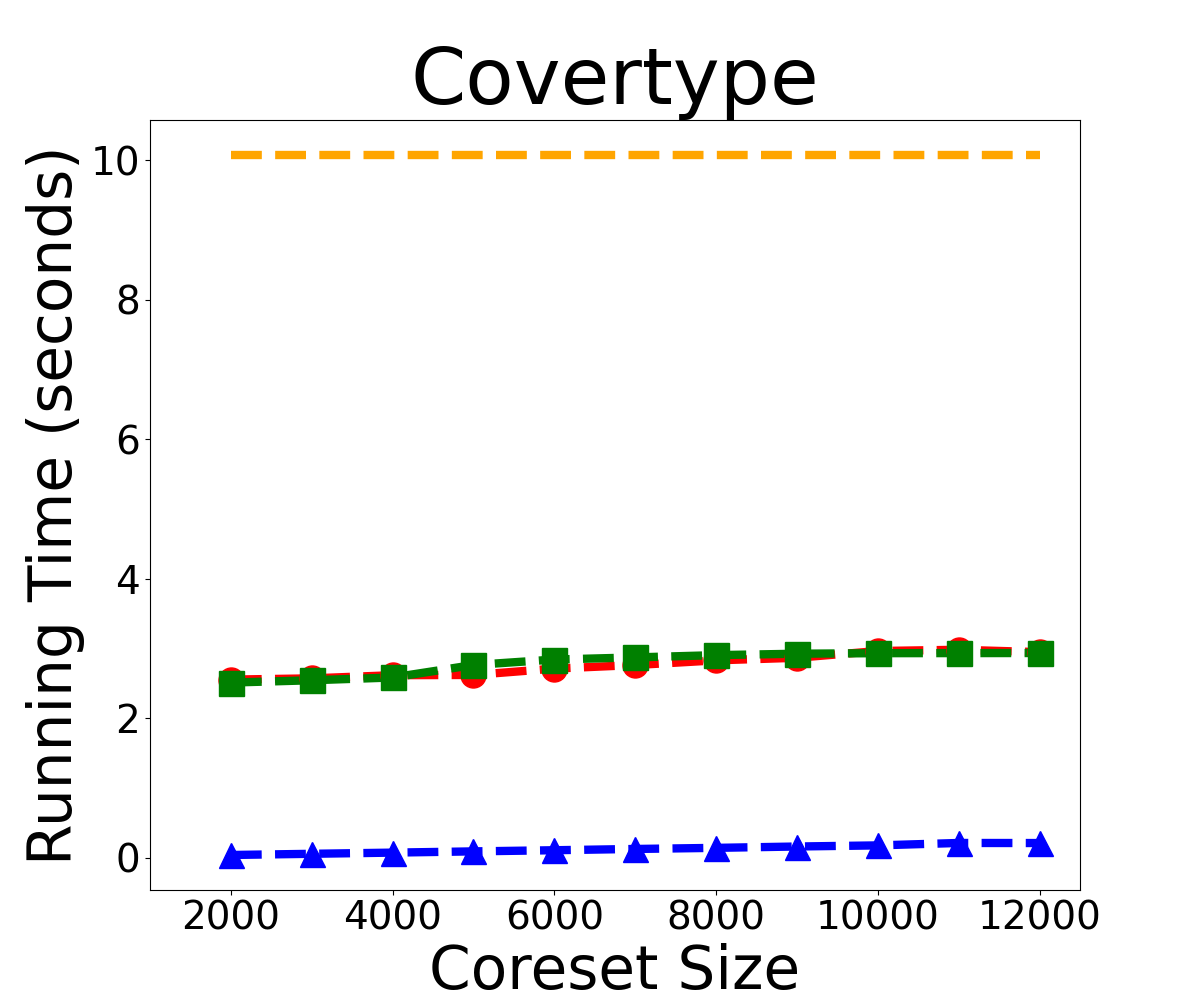}
\end{subfigure}
\begin{subfigure}{0.24\textwidth}
	\centering
	\includegraphics[width=\textwidth]{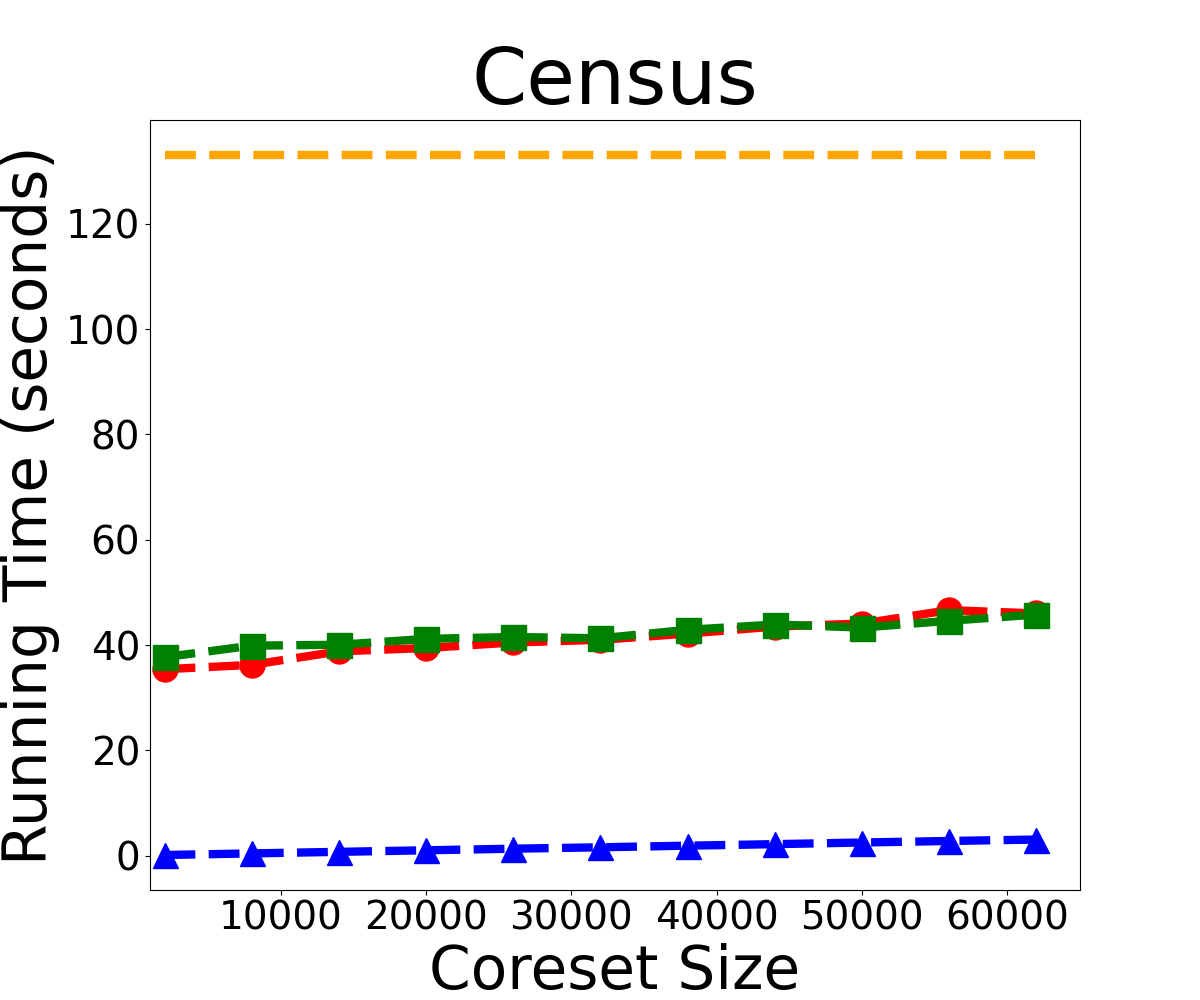}
\end{subfigure}
\begin{subfigure}{0.24\textwidth}
	\centering
	\includegraphics[width=\textwidth]{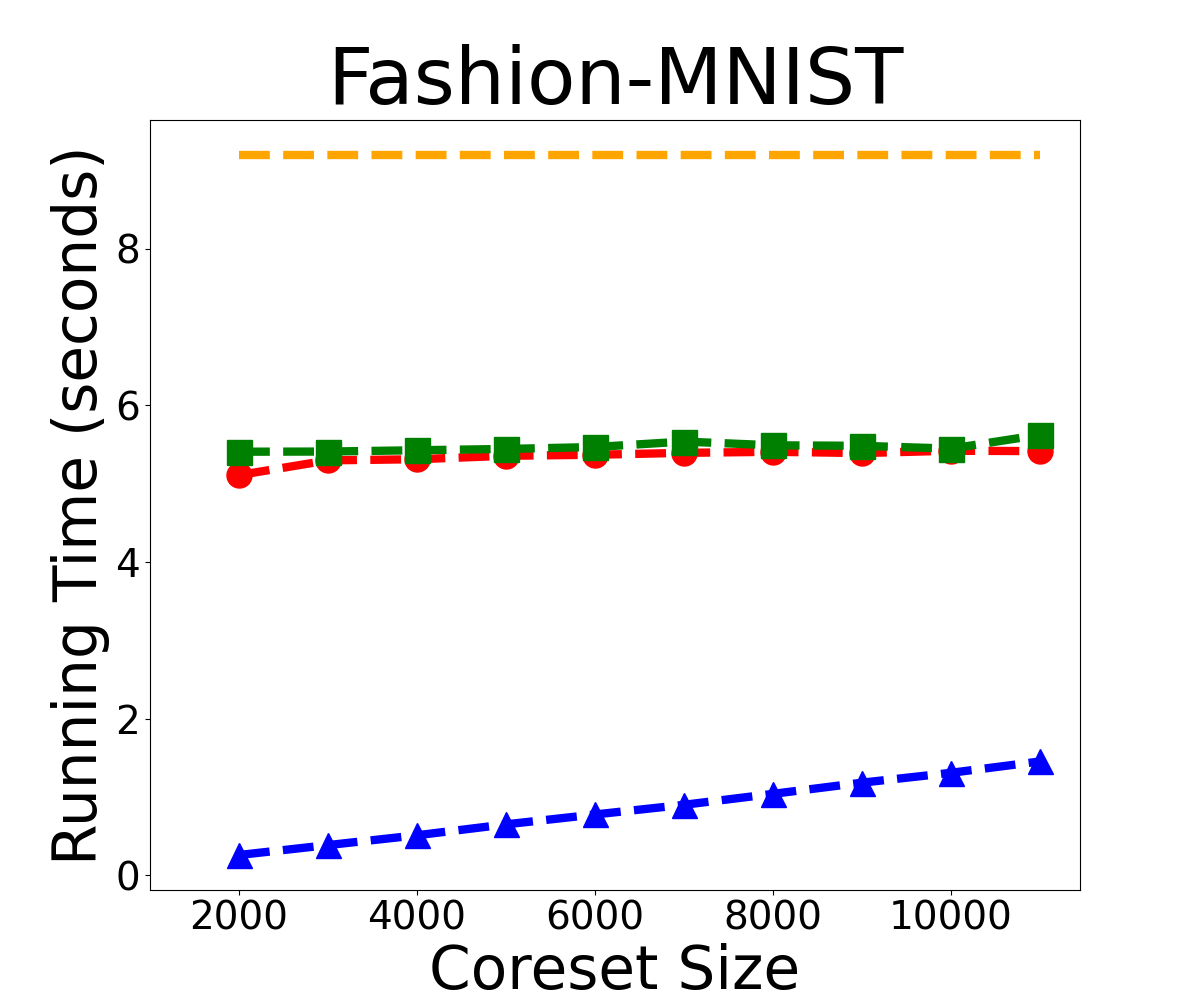}
\end{subfigure}

\caption{The trade-off between the coreset size and the running time for all baselines in each dataset.}
\label{fig:time_result}
\end{figure*}

We implement our coreset from \Cref{thm:intro_hash} and evaluate its performance by measuring how effective it speeds up the classical algorithm of \cite{Gonzalez85},
which gives a $2$-approximation for \kCenter.
We focus on the coreset of \Cref{thm:intro_hash} due to its near-linear running time and use of simple grid structure, which makes it more practical.
Specifically, we conduct experiments with varying coreset sizes
and report running time and the corresponding clustering cost when running Gonzalez's algorithm on the coreset.

\noindent\textbf{Datasets.\ }
We use four real datasets: Kddcup~\cite{kdd_cup_1999_data_130}, Covertype~\cite{covertype_31}, Census~\cite{us_census_data_(1990)_116}, and Fashion-MNIST~\cite{mnist_dataset}.
For each dataset, we extract numerical features to construct a vector in $\RR^d$ for each record,
and we perform a Johnson-Lindenstrauss (JL) transform on each dataset.
The detailed dataset specifications as well as the target dimension of the JL transform are summarized in \Cref{tab:dataset}.

\noindent\textbf{Implementation details.\ }
The implementation of our coreset mostly follows \Cref{alg:main_hash}
but involves an important modification. Since our experiment is to generate a coreset with a specified size budget $s$ (instead of a pre-defined target error), the error parameter $\beta$, which controls the coreset size, is no longer useful.
Consequently, we replace the coreset size upper bound $10kt_\beta$
in Line~\ref{line:Stau_ub} directly with the budget $s$.
We also replace the third parameter $\ell = \beta \tau$ of the hash function in Line~\ref{line:hash} with $\tau$.
The coresets generated in this manner may not have an exact size of $s$,
but it is nonetheless close to $s$, as demonstrated by our experiment results.

\noindent\textbf{Baseline coresets.\ }
We employ three baseline coresets for \kCenter: a) take the entire dataset as the coreset (named \benchmark), serving as the benchmark for accuracy (and having a fixed coreset size); b) a heuristic coreset based on uniform sampling, called \uniform,
which samples uniformly a subset of a given size from the dataset; c) a coreset designed for low dimensions, called \lowdim, which has a worst-case size of $O(k2^{O(d)})$~\cite{AgarwalHV04}.
Note that there is no standard way to adapt the \lowdim\ construction
to our context; a na\"ive implementation can lead easily to a coreset size close to $O(k2^{O(d)})$, which is prohibitively large for our datasets.
In our experiments, we implement this \lowdim\ baseline in a manner similar to our coreset construction,
with the only difference being that we do not use a random shift in the hash function.

\noindent\textbf{Experiment setup.\ }
For all experiments, we set the number of centers $k \approx \sqrt{n}$, where $n$ denotes the size of the dataset.
For each dataset, we vary the target coreset size $s$ (ranging from $k$ up to $30k$),
compute each baseline coreset with the target size $s$,
run the Gonzalez algorithm on the coresets,
and report the running time and clustering cost (averaged over 3 independent trials).
All algorithms are implemented in C++ and compiled with Apple Clang version 15.0.0 at -O3 optimization level. All the experiments are run on a MacBook Air 15.3 with an Apple M3 chip (8 cores, 2.22 GHz), 16GB RAM, and macOS 14.4.1 (23E224).

\noindent\textbf{Experiment results.\ }
We depict in \Cref{fig:cost_result} the trade-off between the coreset size and the clustering cost of running Gonzalez on the coresets.
Our coresets consistently achieve the smallest cost compared with the \uniform\ and \lowdim\ baselines for all sizes, confirming the superior performance of our coresets. For moderately large coreset sizes, our coreset costs are within 1.3 times of Gonzalez's costs on the entire dataset.
The \uniform\ baseline performs generally worse than other baselines (albeit comparable to low-dim in Covertype), which is expected, since na\"ive uniform sampling may not hit sparse clusters.
The performance of \lowdim\ is closer to ours, which is an interesting fact, since our implementation helps it to escape the  worst-case size of $k \cdot 2^{-O(d)}$ on the tested datasets.

As mentioned, the only difference between the implementation of \lowdim\ and ours lies
in whether or not a random shift is applied in the hash function (\Cref{lemma:hash}).
Therefore, the performance gain of ours justifies the effectiveness of a random shift even on real-world datasets.
Finally, we report in \Cref{fig:time_result} the running time, including both coreset construction and the execution of Gonzalez's algorithm, as a function of coreset size.
Ours shows a comparable performance with other baselines, achieving a $2$x - $4$x speedup compared with the benchmark Gonzalez's algorithm.

Overall, we conclude that across all datasets,
our coreset consistently outperforms the \uniform\ and \lowdim\ baselines,
and the coreset-accelerated version of Gonzalez algorithm runs $2$x - $4$x faster than the benchmark Gonzalez algorithm while achieving a cost within $1.3$ times of Gonzalez's cost.

\bibliography{ref}
\bibliographystyle{alphaurl}

\appendix

\section{Proof of \Cref{lemma:lognapx}}

\label{sec:proof_lognapx}

\lemmalognapx*

\begin{proof}
The plan is to first do a random projection to 1D, and show that the pairwise distance is preserved up to $\poly(n)$ factor.
This step takes $O(nd)$ time.
Then we can apply an off-the-shelf near-linear $n\cdot \polylog(n)$ time algorithm for \kCenter on 1D~\cite{MegiddoT83}.

Specifically, let $v \in \RR^d$ be a random vector with entries sampled independently from a standard Gaussian distribution, i.e., $v \sim \mathcal{N}(0, I_d)$.
For every point in $P$, compute its inner product with $v$, resulting in $P'$, i.e., $P' = \{\langle p,  v \rangle : \forall p \in P\}$. $P'$ can be interpreted as the projection of $P$ onto one-dimension.
It remains to prove the following distortion bound.

\begin{claim*}
    $\Pr\Bigl[\forall x,y\in P,~\frac{\dist(x,y)}{\poly(n)}\leq|\langle v,x\rangle-\langle v,y\rangle|\leq\dist(x,y)\cdot\poly(n)\Bigr]\geq1-1/\poly(n)$.
\end{claim*}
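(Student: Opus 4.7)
The plan is to analyze a single pair and then union-bound over all $\binom{n}{2}$ pairs. Fix $x,y \in P$ and set $u := x-y$. Since $v \sim \mathcal{N}(0, I_d)$, the real-valued random variable
\[
Z_{x,y} := \langle v, x \rangle - \langle v, y \rangle = \langle v, u \rangle
\]
is a centered Gaussian with variance $\|u\|_2^2 = \dist(x,y)^2$. Equivalently, $Z_{x,y} / \dist(x,y)$ is a standard normal random variable. Thus controlling the distortion for the pair $(x,y)$ reduces to proving that a standard Gaussian is with high probability neither too large nor too close to zero.

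For the upper bound, the standard Gaussian tail gives $\Pr[|Z_{x,y}| > t \cdot \dist(x,y)] \leq 2 e^{-t^2/2}$. Choosing $t = c \sqrt{\log n}$ for a sufficiently large constant $c$ makes this at most $n^{-10}$, say. For the lower bound, the standard normal density is bounded by $1/\sqrt{2\pi}$, so $\Pr[|Z_{x,y}| < s \cdot \dist(x,y)] \leq 2s/\sqrt{2\pi}$; choosing $s = n^{-10}$ gives failure probability $O(n^{-10})$. Both bounds are of the form $1/\poly(n)$ with the constant in the exponent tunable.

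Taking a union bound over the $\binom{n}{2} < n^2$ pairs, with probability at least $1 - 1/\poly(n)$ we simultaneously have
\[
\frac{\dist(x,y)}{\poly(n)} \leq |\langle v, x\rangle - \langle v, y \rangle| \leq \dist(x,y)\cdot\poly(n)
\qquad\forall\, x,y\in P,
\]
as required. The main (very mild) obstacle is just balancing the two $\poly(n)$ factors so that a single polynomial suffices for both ends and the union bound still goes through; this is routine by choosing the constants in $t$ and $s$ appropriately. With this distortion bound established, one projects $P$ onto $\mathrm{span}(v)$ in $O(nd)$ time, runs the exact $n \cdot \polylog(n)$-time one-dimensional \kCenter algorithm of \cite{MegiddoT83} on $P'$, and lifts the solution back to $\RR^d$; the $\poly(n)$ distortion implies that the resulting cost on $P$ is within a $\poly(n)$ factor of the optimum, giving the claimed $\poly(n)$-approximation in time $O(nd) + n\cdot\polylog(n)$.
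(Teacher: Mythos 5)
Your proof is correct and follows essentially the same approach as the paper: reduce to a standard Gaussian $Z_{x,y}/\dist(x,y)$, bound the probability it is too large and too small, and union-bound over pairs. The only minor variation is that for the upper tail you invoke the Gaussian concentration bound $\Pr[|Z| > t] \leq 2e^{-t^2/2}$ with $t = \Theta(\sqrt{\log n})$, whereas the paper uses the cruder Markov inequality on $\E[|Z|] = \Theta(1)$; both are sufficient since only a $\poly(n)$ distortion is required, and you also make the union bound over $\binom{n}{2}$ pairs explicit where the paper leaves it implicit.
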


\begin{proof}[Proof of Claim]
    The claim trivially holds if $x = y$, and hence we assume $x \neq y$.
    Let $u := (\langle v, x \rangle - \langle v, y \rangle) / \dist(x, y)$.
    Then by a standard property of Gaussian,
    $u \sim \mathcal{N}(0, 1)$.
    On the one hand, by Markov's inequality, we immediately obtain $\Pr[|u| \geq \poly(n)] \leq 1 / \poly(n)$ since $\E[|u|] = \Theta(1)$.
    On the other hand, since the density function of $N(0,1)$ is upper bounded by $1/\sqrt{2\pi}$, we know that   
    $\Pr[|u| \leq 1 / \poly(n)] \leq 1 / \poly(n)$ by integrating the density function from $-1/\poly(n)$ to $1/\poly(n)$.
    This finishes the proof.
\end{proof}

\end{proof}

\section{Composability and Reducibility of Covering}
\label{sec:composable}

Note that our $\alpha$-coresets in both \Cref{thm:intro_hash} and \Cref{thm:intro_sample}
for a point set $P$ are specifically $(\alpha \opt)$-covering for $P$.
We show that such covering is both composable and reducible, in the following claims.
Combining both claims, our coreset may be plugged in a merge-and-reduce framework~\cite{Har-PeledM04}, which has been used to obtain streaming~\cite{Har-PeledM04}, dynamic~\cite{HenzingerK20} and distributed~\cite{BalcanEL13} algorithms for clustering,
to imply algorithms for \kCenter in the mentioned settings.

However, we note that the claimed composability and reducibility may not hold directly from the definition of our coreset (which is more general than covering).

\begin{claim}
    For a generic point set $W \subseteq \mathbb{R}^d$,
    let $\opt(W)$ be the optimal \kCenter cost on $W$.
    Consider two datasets $A, B \subseteq \mathbb{R}^d$,
    and suppose $S_A, S_B$ are $(\alpha \opt(A))$-covering for $A$ and $(\alpha \opt(B))$-covering for $B$, respectively.
    Then, $S_{AB} := S_A \cup S_B$ is an $(\alpha \opt(A \cup B))$-covering for $A \cup B$.
\end{claim}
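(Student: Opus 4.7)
The plan is to reduce the claim to the monotonicity of the \kCenter optimum under taking supersets, together with the defining property of a covering. First I would record the monotonicity statement: for any $X \subseteq Y \subseteq \RR^d$, one has $\opt(X) \le \opt(Y)$. This is immediate, because any optimal center set $C^\ast$ for $Y$ (with $|C^\ast| = k$) satisfies $\cost(X, C^\ast) \le \cost(Y, C^\ast) = \opt(Y)$, so $\opt(X) \le \cost(X, C^\ast) \le \opt(Y)$. Applied to $X = A$ and $X = B$ with $Y = A \cup B$, this gives $\opt(A) \le \opt(A \cup B)$ and $\opt(B) \le \opt(A \cup B)$.

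Next I would verify the covering property of $S_{AB} = S_A \cup S_B$ pointwise. Fix an arbitrary $p \in A \cup B$. Without loss of generality $p \in A$ (the case $p \in B$ is symmetric). Since $S_A$ is an $(\alpha \opt(A))$-covering for $A$, there is some $q \in S_A$ with $\dist(p, q) \le \alpha \opt(A)$. Since $S_A \subseteq S_{AB}$, the same $q$ witnesses
\[
\dist(p, S_{AB}) \le \dist(p, q) \le \alpha \opt(A) \le \alpha \opt(A \cup B),
\]
where the last inequality uses monotonicity from the previous step. Also, $S_{AB} \subseteq A \cup B$ follows since $S_A \subseteq A$ and $S_B \subseteq B$. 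This matches \Cref{def:covering} with $\rho = \alpha \opt(A \cup B)$.

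\textbf{Main obstacle.} There is essentially no technical obstacle: the whole argument is a two-line consequence of monotonicity plus the definition. The only thing worth being careful about is a clean statement of the monotonicity fact (since the paper works in the continuous setting where centers are chosen from $\RR^d$, the same center set is feasible for any subset, which is what makes the argument go through); once that is spelled out, the composability claim drops out immediately.
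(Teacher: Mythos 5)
Your proof is correct and follows the same route as the paper: exploit that $\opt$ is monotone under set inclusion (since a feasible center set for $A \cup B$ is feasible for $A$ and for $B$ separately), then apply the covering guarantee pointwise to conclude. In fact your explicit case split (WLOG $p \in A$) is slightly cleaner than the paper's chain of inequalities, which writes $\min\{\alpha\opt(A),\alpha\opt(B)\}$ at an intermediate step where $\max$, or the case analysis you use, is what is actually justified; the final bound $\alpha\opt(A\cup B)$ is unaffected either way.
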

\begin{proof}
    We verify the definition.
    Consider any point $x \in A \cup B$, then
    \begin{align*}
        \dist(x, S_{AB})
        = \dist(x, S_A \cup S_B)
        = \min\{ \dist(x, S_A), \dist(x, S_B) \}
        &\leq \min\{ \alpha \opt(A), \alpha \opt(B) \} \\
        &\leq \alpha \opt(A \cup B).
    \end{align*}
    This finishes the proof.
\end{proof}

We also give the following claim on the reducibility of covering.
\begin{claim}
    Consider a dataset $P \subseteq \mathbb{R}^d$,
    and suppose $S$ is an $(\alpha\opt(P))$-covering on $P$.
    Then any $(\beta \opt(S))$-covering on $S$ is an $(\alpha + \beta)\opt(P)$-covering on $P$.
\end{claim}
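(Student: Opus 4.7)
The plan is to verify the covering definition directly via the triangle inequality, together with the monotonicity $\opt(S)\leq \opt(P)$. Let $T\subseteq S$ be a $(\beta\opt(S))$-covering of $S$. For a generic point set $W$ and a point $x$, recall the nearest-neighbor map $\pi_W(x):=\argmin_{y\in W}\dist(x,y)$ introduced in the proof of \Cref{lemma:cover_to_approx}.

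First, I would fix an arbitrary $p\in P$ and bound $\dist(p,T)$ by routing through $S$. By the triangle inequality,
\[
\dist(p,T)\;\leq\;\dist(p,\pi_S(p))+\dist(\pi_S(p),T).
\]
The first term is at most $\alpha\opt(P)$ since $S$ is an $(\alpha\opt(P))$-covering of $P$, and the second term is at most $\beta\opt(S)$ since $\pi_S(p)\in S$ and $T$ is a $(\beta\opt(S))$-covering of $S$.

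Second, I would observe the monotonicity $\opt(S)\leq \opt(P)$: any optimal center set $C^*\subseteq\RR^d$ realizing $\opt(P)$ in the continuous \kCenter problem also serves as a feasible solution for $S$, because $S\subseteq P$ implies $\cost(S,C^*)\leq \cost(P,C^*)=\opt(P)$. Combining the two steps yields $\dist(p,T)\leq \alpha\opt(P)+\beta\opt(S)\leq (\alpha+\beta)\opt(P)$, which is exactly the definition of a $((\alpha+\beta)\opt(P))$-covering of $P$.

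There is no real obstacle here; the only subtlety worth being explicit about is the monotonicity of the optimum under taking subsets, which relies on the problem being continuous (centers drawn from all of $\RR^d$ rather than from the input). If instead one worked with the discrete variant where centers must come from the point set, the inequality $\opt(S)\leq \opt(P)$ need not hold, so this assumption should be flagged — it is already used in the same spirit in the last inequality of the proof of \Cref{lemma:cover_to_approx}.
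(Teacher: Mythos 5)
Your proof is correct and follows essentially the same route as the paper: bound $\dist(p,T)$ via the triangle inequality through $\pi_S(p)$, using the covering property of $S$ for the first term and of $T$ for the second, then apply $\opt(S)\leq\opt(P)$. Your explicit remark that this last inequality relies on the continuous version of \kCenter (centers from all of $\RR^d$) is a fair and worthwhile clarification that the paper leaves implicit.
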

\begin{proof}
    We verify the definition.
    Consider an arbitrary $(\beta \opt(S))$-covering $S'$ on $S$.
    For a generic set $W \subseteq \mathbb{R}^d$ and $y \in \mathbb{R}^d$,
    let $W(y)$ denote the neareset neighbor of $y$ in $W$.
    Then for every $x \in P$, we have
    \begin{align*}
        \dist(x, S')
        = \dist(x, S'(x))
        \leq \dist(x, S(x)) + \dist(S(x), S'(x))
        &\leq \alpha \opt(P) + \beta \opt(S) \\
        &\leq (\alpha + \beta) \opt(P).
    \end{align*}
    This finishes the proof.
\end{proof}

\end{document}